\newcommand{\E}{\mathbb{E}}
\renewcommand{\Pr}{\mathsf{P}}
\newcommand{\dGam}{\text{Gamma}}
\newcommand{\dNorm}{\text{N}}
\newcommand{\dUnif}{\text{U}}
\newcommand{\dExp}{\text{Exp}}
\newcommand{\dFNorm}{\text{FN}}
\newcommand{\cd}{~\vert~}
\newcommand{\cdotmid}{\,\cdot \mid}
\newcommand{\Real}{\mathbb{R}}
\newcommand{\dd}{\mathrm{d}}
\newcommand{\push}[1]{#1_{\sharp}}
\newcommand{\ie}{i.e.,}
\newcommand{\eg}{e.g.,}
\newcommand{\probfam}{\mathcal{P}}
\newcommand{\kernfam}{\mathcal{K}}
\newcommand{\simiid}{\overset{\text{iid}}{\sim}}
\newcommand{\distarrow}{\overset{d}{\rightarrow}}
\newcommand{\asarrow}{\overset{a.s.}{\rightarrow}}
\newcommand{\sy}{\tilde{y}}
\DeclareMathOperator*{\argmax}{arg\,max}
\newcommand{\ArgMax}[1]{\raisebox{0.5ex}{\scalebox{0.8}{$\displaystyle \argmax_{#1}\;$}}}
\begin{document}

\title{Bayesian Score Calibration for Approximate Models}

\author{\name Joshua J. Bon\thanks{Thanks to Ming Xu, Aad van der Vaart, and anonymous referees for their helpful comments.} \email joshua.bon@adelaide.edu.au \\
\addr School of Mathematical Sciences,\\ Adelaide University
       \AND
\name David J. Warne \email david.warne@qut.edu.au \\
\addr School of Mathematical Sciences \& Centre for Data Science, \\ Queensland University of Technology \\
ARC Centre of Excellence for the Mathematical Analysis of Cellular Systems
        \AND
\name David J. Nott \email standj@nus.edu.sg \\
\addr Department of Statistics and Data Science,\\
  National University of Singapore
        \AND
\name Christopher Drovandi \email c.drovandi@qut.edu.au \\
\addr School of Mathematical Sciences \& Centre for Data Science, \\ Queensland University of Technology \\
ARC Centre of Excellence for the Mathematical Analysis of Cellular Systems
}

\editor{Edo Airoldi}

\maketitle

\begin{abstract}%   <- trailing '%' for backward compatibility of .sty file
Scientists continue to develop increasingly complex mechanistic models to reflect their knowledge more realistically. Statistical inference using these models can be challenging since the corresponding likelihood function is often intractable and model simulation may be computationally burdensome.  Fortunately, in many of these situations it is possible to adopt a surrogate model or approximate likelihood function.  It may be convenient to conduct Bayesian inference directly with a surrogate, but this can result in a posterior with poor uncertainty quantification.  In this paper, we propose a new method for adjusting approximate posterior samples to reduce bias and improve posterior coverage properties.  We do this by optimizing a transformation of the approximate posterior, the result of which maximizes a scoring rule.  Our approach requires only a (fixed) small number of complex model simulations and is numerically stable.  We develop supporting theory for our method and demonstrate beneficial corrections to approximate posteriors across several examples of increasing complexity. 
\end{abstract}

\begin{keywords}
  likelihood-free inference, simulation-based inference, scoring rules, posterior correction, surrogate model
\end{keywords}

\section{Introduction}

Scientists and practitioners desire greater realism and complexity in their models, but this can complicate likelihood-based inference.  If the proposed model is sufficiently complex, computation of the likelihood can be intractable.  In this setting, if model simulation is feasible, then approximate Bayesian inference can proceed via likelihood-free methods \citep{Sisson2018}.  However, most likelihood-free methods require a large number of model simulations. It is common for likelihood-free inference methods to require hundreds of thousands of model simulations or more.  Thus, if model simulation is also computationally intensive, it is difficult to conduct inference via likelihood-free methods.

Often it is feasible to propose an approximate but more computationally tractable ``surrogate'' version of the model of interest. However, the resulting approximate Bayesian inferences can be biased (relative to the true posterior), and produce distributions with poor uncertainty quantification that do not have correct coverage properties \citep[see for example,][]{pmlr-v97-xing19a,Warne2022}.

In this paper, we propose a novel Bayesian procedure for approximate inference. Related literature will be reviewed in Section~\ref{sec:related}. Our approach begins with an approximate model and transforms the resulting approximate posterior to reduce bias and more accurately quantify uncertainty.  Only a small number (\ie\ hundreds) of model simulations from the target model are required, whilst standard Bayesian inference is only conducted using the approximate model.  Furthermore, these computations are trivial to parallelize, which can reduce the time cost further by an order of magnitude or more, depending on available computing resources.  Importantly, our procedure does not require any evaluations of the likelihood function of the complex model of interest.  

The approximate posterior can be formed on the basis of a surrogate model or approximate likelihood function. For example, a surrogate model may be a deterministic version of a complex stochastic model \citep[\eg][]{Warne2022} and an example of a surrogate likelihood is the Whittle likelihood for time series models \citep{whittle1953estimation}.  Furthermore, our framework permits the application of approximate Bayesian inference algorithms on the surrogate model. For example, the Laplace approximation, variational approximations, and likelihood-free inference methods that require only a small number of model simulations \citep[e.g.,][]{gutmann2016bayesian}. Hence, computational approximations and surrogate models can be used, and corrected, simultaneously.
\begin{figure}
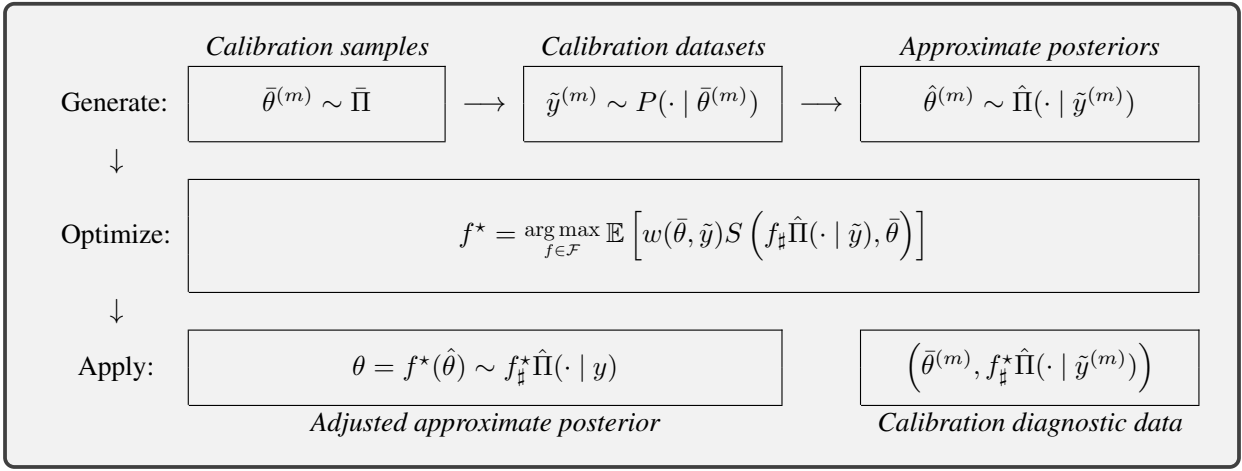

\centering
\tcbox{
\resizebox{0.93\textwidth}{!}{%
\begin{tabular}{cccccc}
    & \textit{Calibration samples} &     & \textit{Calibration data sets} &    & \textit{Approximate posteriors} \\ 
\cline{2-2} \cline{4-4} \cline{6-6} 
\multicolumn{1}{l|}{\multirow{2}{*}{Generate:}} & \multicolumn{1}{c|}{\multirow{2}{*}{$\bar{\theta}^{(m)} \sim \bar{\Pi}$}} & \multicolumn{1}{l|}{\multirow{2}{*}{$\longrightarrow$}} & \multicolumn{1}{c|}{\multirow{2}{*}{$\sy^{(m)} \sim P(\cdotmid \bar{\theta}^{(m)})$}} & \multicolumn{1}{l|}{\multirow{2}{*}{$\longrightarrow$}} & \multicolumn{1}{c|}{\multirow{2}{*}{$\hat{\theta}^{(m)} \sim \hat{\Pi}(\cdotmid \sy^{(m)})$}} \\
\multicolumn{1}{l|}{}                             & \multicolumn{1}{l|}{}                  & \multicolumn{1}{l|}{}                  & \multicolumn{1}{l|}{}                  & \multicolumn{1}{l|}{}                  & \multicolumn{1}{l|}{} \\ \cline{2-2} \cline{4-4} \cline{6-6} 
$\downarrow$ & & & & & \\ \cline{2-6} 
\multicolumn{1}{l|}{\multirow{3}{*}{Optimize:}} & \multicolumn{5}{c|}{\multirow{3}{*}{$f^{\star} = \ArgMax{f \in \mathcal{F}} \E\left[w(\bar{\theta}, \sy) S \left(\push{f}\hat{\Pi}(\cdotmid \sy),\bar{\theta}\right) \right]$}} \\ %\approx \ArgMax{f \in \mathcal{F}} \Sum{m=1}{M} w(\theta^{(m)}, \sy^{(m)})S^N(\push{f}\hat{\Pi}(\cdotmid \sy^{(m)}), \theta^{(m)})$
\multicolumn{1}{l|}{} & \multicolumn{5}{l|}{} \\
\multicolumn{1}{l|}{} & \multicolumn{5}{l|}{} \\ \cline{2-6}
$\downarrow$ & & & & & \\ \cline{2-4} \cline{6-6} 
\multicolumn{1}{c|}{\multirow{2}{*}{Apply:}} & \multicolumn{3}{c|}{\multirow{2}{*}{$\theta = f^{\star}(\hat\theta) \sim \push{f}^{\star}\hat{\Pi}(\cdotmid y)$}} & \multicolumn{1}{c|}{\multirow{2}{*}{}} & \multicolumn{1}{c|}{\multirow{2}{*}{$\left(\bar{\theta}^{(m)}, \push{f}^\star \hat{\Pi}(\cdotmid \sy^{(m)}) \right)$}} \\
\multicolumn{1}{l|}{} & \multicolumn{3}{l|}{} & \multicolumn{1}{l|}{} & \multicolumn{1}{l|}{} \\ \cline{2-4} \cline{6-6}
\multicolumn{1}{l}{} & \multicolumn{3}{c}{\textit{Adjusted approximate posterior}} & \multicolumn{1}{l}{} & \multicolumn{1}{c}{\textit{Calibration diagnostic data}} \\
\end{tabular}%
}} \caption{Graphical overview of Bayesian score calibration. Firstly, the importance distribution $\bar{\Pi}$ and data-generating process $P(\cdotmid \theta)$ simulate parameter-data pairs $(\bar{\theta}^{(m)}, \sy^{(m)})$ for $m \in \{1,\ldots,M\}$. Each simulated data set, $\sy^{(m)}$, defines a new approximate posterior, $\hat{\Pi}(\cdotmid \sy^{(m)})$, which we approximate with Monte Carlo samples. Secondly, we use a strictly proper scoring rule $S$ to find the best transformation of the approximate posterior, defining the pushforward distribution $\push{f}\hat{\Pi}(\cdotmid \sy)$, with respect to true data-generating parameter $\bar\theta$, averaged over~$\bar\Pi$, with weights $w(\bar\theta, \sy)$. The optimization objective function is approximated using Monte Carlo with pre-computed samples from the generation step. Finally, the optimal function, $f^\star$, is used to generate samples from the adjusted approximate posterior with the observed data, $y$, and produce data for diagnostic summaries.}\label{schematic-bsc}
\end{figure}

Figure~\ref{schematic-bsc} displays a graphical overview of Bayesian score calibration, which we describe in detail throughout the paper. We develop a new theoretical framework to support our method that is also applicable to some related methods. In particular, Theorem~\ref{th:SBIresult} generalizes the underlying theory justifying recent simulation-based inference methods  \citep[\eg][]{pacchiardi2022likelihood}. Moreover, we provide practical calibration diagnostics that assess the quality of the adjusted approximate posteriors and alert users to success or failure of the procedure.

The rest of this paper is organized as follows. In Section \ref{sec:methods} we provide background, present our approximate model calibration method, and develop theoretical justifications.  Our new method is demonstrated on some examples of increasing complexity in Section \ref{sec:examples} with further examples deferred to the supplementary materials \ref{sec:ex-gauss}--\ref{sec:lotka-abc}. Section \ref{sec:discussion} contains a concluding discussion of limitations, as well as ongoing and future research directions.

\section{Methods} \label{sec:methods}

%\subsection{Background}

Bayesian inference updates the prior distribution of unknown parameters~${\theta \sim \Pi}$ with information from observed data $y \sim P(\cdotmid \theta)$ having some dependence on $\theta$. We take~$\Pi$ as the prior probability measure with probability density (or mass) function $\pi$, and $P(\cdotmid \theta)$ as the data-generating process with probability density (or mass) function $p(\cdotmid \theta)$. We will use the terms distribution and law of a random variable with reference to an underlying probability measure. The function $p(y \cd \cdot\,)$ is the likelihood for fixed data $y$ and varying $\theta$. The prior distribution~$\Pi$ is defined on measurable space $(\Theta,\vartheta)$, whilst $P(\cdotmid \theta)$ is defined on $(\mathsf{Y},\mathcal{Y})$ for fixed $\theta \in \Theta$, where $\vartheta$ (resp. $\mathcal{Y}$) is a $\sigma$-algebra on $\Theta$ (resp. $\mathsf{Y}$). 

Bayes theorem determines that the posterior distribution, incorporating information from the data, has density (mass) function $\pi(\theta \cd y) = \frac{p(y \cd \theta) \pi(\theta)}{p(y)}$,
where $p(y) = \int p(y \cd \theta) \Pi(\dd \theta)$ for $\theta \in \Theta$. Here, for fixed $y \in \mathsf{Y}$, $Z = p(y)$ is the posterior normalizing constant. For varying~$y \in \mathsf{Y}$, $p(y)$ is the density (mass) function of the marginal distribution of the data~$P$, defined on $(\mathsf{Y},\mathcal{Y})$.

Posterior inference typically requires approximate methods as the normalizing constant~$Z$ is unavailable in a closed form. There are two broad families of approximations in general use: (i) sampling methods, including Markov chain Monte Carlo \citep[MCMC,][]{brooks2011handbook} and sequential Monte Carlo \citep[SMC,][]{chopin2020introduction}; and (ii) optimization-based methods including variational inference or Laplace approximations \citep{bishop2006pattern}. All standard implementations of these methods rely on pointwise evaluation of the likelihood function, or some unbiased estimate.

When the likelihood is infeasible or computationally expensive to evaluate pointwise we may wish to choose a surrogate posterior which approximates the original in some sense. We consider a surrogate with distribution $\hat{\Pi}(\cdotmid y )$ on~$(\Theta,\vartheta)$ and with density (mass) function~$\hat{\pi}(\cdotmid y )$. Such a surrogate can arise from an approximation to the original model, likelihood or posterior. In the case of a surrogate model or approximate likelihood, this is equivalent to $\hat{\pi}(\theta \cd y ) \propto \hat{p}( y \cd \theta)\pi(\theta)$ for an approximate likelihood $\hat{p}( y \cd \theta)$.

We design methods to calibrate the approximate posterior when the true likelihood,~$p(y \cd \theta)$, cannot be evaluated but we can sample from the data-generating process~$P(\cdotmid \theta)$. To facilitate this calibration we need a method to compare approximate distributions to the true posterior distribution we are interested in. To this end, the next section introduces scoring rules and expected scores. 

Before proceeding, we define some general notation. We will continue to use a hat to denote objects related to approximate posteriors and use $\sy$ to denote simulated data. The expectation of $f$ with respect to probability distribution $Q$ is written as~$\E_{\theta \sim Q}[f(\theta)]$ or $Q(f)$. The degenerate probability measure at $x$ is denoted by $\delta_x$. If $Q_1$ and $Q_2$ are measures where~$Q_1$ is dominated by $Q_2$ we write~$Q_1 \ll Q_2$. The Euclidean norm is denoted by $\Vert \cdot \Vert_2$.

\subsection{Bayesian Score Calibration}

Let $S:\probfam \times \Theta \rightarrow \Real \cup \{-\infty,\infty\}$ be a scoring rule for the class of probability distributions~$\probfam$ where~$U,V \in \probfam$ are defined on the measurable space~$(\Theta,\vartheta)$. A scoring rule compares a (single) observation $\theta$ to the probabilistic prediction $U$ by evaluating $S(U,\theta)$. The expected score under $V$ is defined as
$S(U,V) = \E_{\theta \sim V}\left[ S(U,\theta) \right]$. A scoring rule $S$ is strictly proper in~$\probfam$ if $S(V,V) \geq S(U,V)$ for all $U,V \in \probfam$ and equality holds if and only if $U = V$. We refer to \citet{gneiting2007strictly} for a review of scoring rules in statistics. We use the expected score $S(U,V)$ to define a discrepancy between an adjusted approximate posterior~$U$ and the true posterior $V(\cdot) = \Pi(\cdotmid y)$. 

To motivate our use of scoring rules, consider the variational problem
\begin{equation}\label{eq:posterioroptim}
    \max_{U \in \probfam} \E_{\theta \sim \Pi(\cdot\mid y)}\left[ S(U,\theta) \right],
\end{equation}
with optimal distribution $U^\star$. If $S$ is strictly proper and the class of distributions $\probfam$ is rich enough, \ie\ $\Pi(\cdotmid y) \in \probfam$ for fixed data $y$, then we recover the posterior as the optimal distribution uniquely, that is $U^{\star}(\cdot) = \Pi(\cdotmid y)$. Unfortunately, the expectation in \eqref{eq:posterioroptim} is intractable due to its expression with $\Pi(\cdotmid y)$. To circumvent this, we instead consider averaging the objective function over some distribution~$Q$ on~$(\mathsf{Y}, \mathcal{Y})$, leading to the new optimization problem
\begin{equation}\label{eq:avgposterioroptim}
    \max_{K \in \kernfam} \E_{\sy\sim Q} \E_{\theta \sim \Pi(\cdot\mid \sy)}\left[ S(K(\cdotmid \sy),\theta) \right],
\end{equation}
where $K(\cdotmid \sy)$ is now a kernel defined for $\sy \in \mathsf{Y}$, and $\kernfam$ is a family of Markov kernels. If~$\kernfam$ is sufficiently rich the optimal kernel at $y$, $K^{\star}(\cdotmid y)$, will be the posterior $\Pi(\cdotmid y)$.
\begin{definition}[Sufficiently rich kernel family]
Let $\kernfam$ be a family of Markov kernels, $\probfam$ be a class of probability measures, $Q$ be a probability measure on $(\mathsf{Y},\mathcal{Y})$, and $\Pi(\cdotmid \sy)$ be the true posterior at $\sy$. We say $\kernfam$ is sufficiently rich with respect to $(Q,\probfam)$ if for all $K \in \kernfam$, $K(\cdotmid \sy) \in \probfam$ almost surely and there exists $K \in \kernfam$ such that $K(\cdotmid \sy) = \Pi(\cdotmid \sy)$ almost surely, where $Q$ is the law of $\tilde{y}$.
\end{definition}

The maximization problem in \eqref{eq:avgposterioroptim}
is significantly more difficult than that of \eqref{eq:posterioroptim} as it involves learning the form of a Markov kernel dependent on any data generated by~$Q$, rather than a probability distribution (\ie\ with data set fixed). However, unlike \eqref{eq:posterioroptim}, the new problem \eqref{eq:avgposterioroptim} can be translated into a tractable optimization as described in Theorem~\ref{th:SBIresult}.

\begin{theorem}\label{th:SBIresult}
Consider a strictly proper scoring rule $S$ relative to the class of distributions~$\probfam$, and \textit{importance distribution} $\bar\Pi$ with Radon–Nikodym derivative $r = \dd \Pi / \dd \bar \Pi$  where $\Pi$ is the prior.   Let $v:\mathsf{Y} \rightarrow [0,\infty)$ and define $Q$ by change of measure $Q(\dd \sy) = P(\dd \sy)v(\sy)/P(v)$ such that the normalising constant $P(v) \in (0, \infty)$, where $P$ is the marginal distribution of the data. Assume the true posterior $\Pi(\cdotmid \sy) \in \probfam$ almost surely, where $Q$ is the law of $\tilde{y}$. Let the optimal Markov kernel~$K^{\star}$ be
\begin{equation}\label{eq:tracoptim}
        K^{\star} \equiv \argmax_{K \in \kernfam} \E_{\theta \sim \bar\Pi} \E_{\sy \sim P(\cdot\mid \theta)}\left[w(\theta, \sy) S(K(\cdotmid \sy),\theta) \right], \quad w(\theta, \sy) = r(\theta) v(\sy).
\end{equation}
If the family of kernels~$\kernfam$ is sufficiently rich with respect to $(Q,\probfam)$ then $K^{\star}(\cdotmid \sy) = \Pi(\cdotmid \sy)$ almost surely.
\end{theorem}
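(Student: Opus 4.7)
The plan is to reduce the weighted formulation in \eqref{eq:tracoptim} to the averaged formulation in \eqref{eq:avgposterioroptim}, after which strict properness combined with sufficient richness delivers the conclusion. The reduction proceeds by a change of measure in $\theta$ followed by the standard Bayes factorization of the joint law of $(\theta,\sy)$.

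First I would absorb the importance weight using $r = \dd\Pi/\dd\bar\Pi$, so that
\begin{align*}
\E_{\theta \sim \bar\Pi}\E_{\sy \sim P(\cdot\cd\theta)}\bigl[w(\theta,\sy) S(K(\cdot\cd\sy),\theta)\bigr]
 = \E_{\theta \sim \Pi}\E_{\sy \sim P(\cdot\cd\theta)}\bigl[v(\sy) S(K(\cdot\cd\sy),\theta)\bigr].
\end{align*}
Next, under mild integrability of the score against the joint, Tonelli/Fubini lets me swap integration order and use the prior-predictive factorization $\Pi(\dd\theta)P(\dd\sy\cd\theta) = P(\dd\sy)\Pi(\dd\theta\cd\sy)$. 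Pulling $v(\sy)$ outside the inner expectation and inserting $P(v)$ recognizes $Q$ through $Q(\dd\sy) = v(\sy)P(\dd\sy)/P(v)$, yielding
\begin{align*}
\E_{\theta \sim \bar\Pi}\E_{\sy \sim P(\cdot\cd\theta)}\bigl[w(\theta,\sy) S(K(\cdot\cd\sy),\theta)\bigr]
 = P(v)\,\E_{\sy \sim Q}\E_{\theta \sim \Pi(\cdot\cd\sy)}\bigl[S(K(\cdot\cd\sy),\theta)\bigr].
\end{align*}
Because $P(v) \in (0,\infty)$ is a constant, maximizers over $\kernfam$ on the left coincide with maximizers on the right, so \eqref{eq:tracoptim} is equivalent to \eqref{eq:avgposterioroptim}.

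Finally, I would establish the pointwise argmax. For each $\sy$, strict properness of $S$ on $\probfam$, together with the hypothesis $\Pi(\cdot\cd\sy) \in \probfam$ almost surely, implies that the map $U \mapsto \E_{\theta \sim \Pi(\cdot\cd\sy)}[S(U,\theta)]$ is uniquely maximized over $\probfam$ at $U = \Pi(\cdot\cd\sy)$. Sufficient richness of $\kernfam$ with respect to $(Q,\probfam)$ then supplies a kernel $K \in \kernfam$ with $K(\cdot\cd\sy) = \Pi(\cdot\cd\sy)$ for $Q$-a.e.\ $\sy$; this kernel attains the supremum in the averaged problem. Any alternative $K' \in \kernfam$ differing from $\Pi(\cdot\cd\sy)$ on a set of positive $Q$-measure produces a strictly smaller integrand on that set and hence a strictly smaller objective, giving $K^\star(\cdot\cd\sy) = \Pi(\cdot\cd\sy)$ $Q$-a.s.

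The main obstacle I anticipate is the passage from pointwise strict properness to an integrated strict inequality inside $\kernfam$: this requires the maps $\sy \mapsto S(K(\cdot\cd\sy),\Pi(\cdot\cd\sy))$ to be jointly measurable and $Q$-integrable for every $K \in \kernfam$, so that Fubini is justified and the pointwise uniqueness upgrades to $Q$-a.s.\ uniqueness of the optimizer. These measurability and integrability conditions are standard but would need to be assumed implicitly in the setup (or added as background hypotheses on $\kernfam$ and $S$); they pose no conceptual difficulty beyond routine regularity.
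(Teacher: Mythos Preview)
Your proposal is correct and follows essentially the same route as the paper: change of measure from $\bar\Pi$ to $\Pi$ via $r$, Bayes factorization $\Pi(\dd\theta)P(\dd\sy\cd\theta)=P(\dd\sy)\Pi(\dd\theta\cd\sy)$, recognition of $Q$ up to the constant $P(v)$, and then the strict-properness plus sufficient-richness argument to identify $K^\star$ $Q$-a.s. The paper's proof is somewhat terser and does not explicitly discuss the measurability/integrability caveats you flag, but the logical structure is identical.
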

A proof for Theorem~\ref{th:SBIresult} is provided in Appendix~\ref{pr:SBIresult}.
\begin{remark}
We can interpret the optimal $K^{\star}$ in Theorem~\ref{th:SBIresult} as recovering the true posterior for $\sy \sim Q$, that is $K^{\star}(\cdotmid \sy) = \Pi(\cdotmid \sy)$ for any $\sy$ in the support of $P$ (smallest set with probability one) such that $v(\sy) > 0$. 
\end{remark}
\begin{remark}
A special case of Theorem~\ref{th:SBIresult} is stated by \citet{pacchiardi2022likelihood} where $Q = P$, the marginal distribution of the data. Considering $Q \neq P$ is crucial for establishing our subsequent results. Further, \citet{lueckmann2017flex} consider the case where the scoring rule is the log-probability score and $v(\sy) = k(\sy,y)$, a kernel measuring the discrepancy between the simulated and observed data. 
\end{remark}
\begin{remark}
The objective function in \eqref{eq:tracoptim} can also be seen as an amortized variational optimization problem. However, we take the perspective of targeting a fixed data set. We expect the function $v$ to be very useful in this setting, but do not explore this further here.
\end{remark}
Theorem~\ref{th:SBIresult} changes the order of expectation by noting the joint distribution of $(\theta, \sy)$ can be represented by the marginal distribution of $\sy$ and conditional distribution of $\theta$ given $\sy$ or vice versa. It also uses an importance distribution,~$\bar\Pi$, instead of the prior, $\Pi$. As simulators for the importance distribution (or prior) and data-generating process are assumed to be available, it is possible to estimate the objective function of \eqref{eq:tracoptim} using Monte Carlo approximations.

The weighting function $w$ is an importance sampling correction but also includes an additional component, $v$. The function $v$ describes a change of measure for the marginal distribution $P$ and represents the flexibility in $Q$ such that the optimization problems in \eqref{eq:avgposterioroptim} and \eqref{eq:tracoptim} remain equivalent. Overall then, Theorem~\ref{th:SBIresult} tells us we have the freedom to choose the importance distribution $\bar\Pi$ and change of measure $v$, in principle, without affecting the optimization. 

In practice it may be difficult to verify the conditions of Theorem~\ref{th:SBIresult} that depend on the choice of scoring rule, kernel family, importance distribution, and function $v$. We discuss these concerns, our choices, and practical consequences in the context of intractable posteriors in Sections~\ref{sec:energyscore}--\ref{sec:practicalweights}. We also provide a diagnostic tool to monitor violations of these conditions and failures in the Monte Carlo approximation of the objective function in Section~\ref{sec:calcheck}.

\subsection{Bayesian Score Calibration Algorithm}\label{sec:pseudocode}

The pseudo-code for Bayesian score calibration is detailed in Algorithm~\ref{alg:method}, where we use a Monte Carlo approximation of the optimization objective in \eqref{eq:transfoptim} and the subsequent sections describe implementation details and justifications. We assume that the vector of parameters~$\theta$ has support on $\Real^{d}$. Should only a subset of parameters in $\theta$ need correcting, Steps 5 and 6 can proceed using only this subset. Steps 5 and 6 can also be performed element-wise if correcting the joint distribution is unnecessary. When $\theta \in \Theta \subset \Real^{d}$ (a strict subset) we use an invertible transformation to map $\theta$ to $\Real^{d}$ in our examples in Section~\ref{sec:examples}. In this case, some care needs to be taken to ensure the weights are calculated correctly in Step~3. After performing the adjustment we can transform back to the original space.

\begin{algorithm}
	\caption{{Bayesian score calibration using approximate models \label{alg:method} } 
		\vspace{2mm}
		\newline
		{\em Inputs:} 
		Number of calibration data sets $M$, number of Monte Carlo samples $N$, importance distribution $\bar\Pi$, approximate posterior model $\hat\Pi$, scoring rule $S$, transformation function family $\mathcal{F}$, observed data set $y$, clipping level $\alpha\in [0,1]$. Optional: stabilizing function $v$ (otherwise unit valued).
		\vspace{2mm}
		\newline
		{\em Outputs:} Estimated optimal transformation function $f^*$ and samples from the adjusted approximate posterior based on observed data set $y$.
	}\vspace{2mm}
    \begin{enumerate}
    \item For $m \in \{1,\ldots, M\}$
	\begin{enumerate}[(i)]
		\item Generate calibration samples, $\bar\theta^{(m)} \sim \bar{\Pi}$.
		\item Generate calibration data sets, $\sy^{(m)} \sim P( \cdotmid \bar\theta^{(m)})$.
        \item Calculate weights, $\displaystyle w^{(m)} = \frac{\pi(\bar\theta^{(m)})}{\bar\pi(\bar\theta^{(m)})}v(\sy^{(m)})$ if $\alpha < 1$, else $w^{(m)} = 1$ if $\alpha = 1$.
		\item Calculate (or sample from) the approximate posteriors $\hat{\Pi}(\cdotmid \sy^{(m)})$.
        \item[] If sampling, then $\hat\theta^{(m)}_i \sim \hat{\Pi}(\cdotmid \sy^{(m)})$ for $i \in \{1, \ldots, N\}$.
        \end{enumerate}
        \item Clip weights, $w^{(m)}_\text{clip} = \min\{w^{(m)},q_{1-\alpha}\}$ for $m \in \{1,\ldots,M\}$,
        where $q_{1-\alpha}$ is the ${100(1-\alpha)}$\% empirical quantile of the weights.
		\item Solve the optimization,  $f^{\star} = \argmax_{f \in \mathcal{F}}              \sum_{m=1}^{M} w^{(m)}_\text{clip} S(\push{f}\hat{\Pi}(\cdotmid \sy^{(m)}), \bar{\theta}^{(m)})$.
  \item[] If using the energy score, use $\{\hat\theta^{(m)}_i\}_{i=1}^N$ from Step~(iv) to estimate $S$ as per \eqref{eq:mcscore}.
		\item Generate approximate adjusted samples from pushforward $\push{f}^{\star}\hat\Pi(\cdotmid y)$ by calculating $f^\star(\theta)$ where $\theta \sim \hat\Pi(\cdotmid y)$ for the desired number of samples.
	\end{enumerate}
\end{algorithm}

\subsection{The Energy Score}\label{sec:energyscore}

Thus far, we have considered a generic strictly proper scoring rule, $S$, as the propriety of our method does not rely on a specific scoring rule. For the remainder of the paper we will focus on the so-called energy score \citep[Section 4.3,][]{gneiting2007strictly} defined as
\begin{equation*}\label{eq:energyscore}
    S(U,\theta) = \frac{1}{2}\E_{u,u^{\prime} \sim U}\Vert u - u^{\prime}\Vert^{\beta}_2  - \E_{u \sim U}\Vert u - \theta \Vert^{\beta}_2,
\end{equation*}
for distribution $U$ on $(\Theta, \vartheta)$, $\theta \in \Theta$, and fixed $\beta \in (0,2)$. Note that $u$ and $u^\prime$ are independent realizations from $U$.  We find that $\beta = 1$ gives good empirical performance and note that the energy score at this value is a multivariate generalization of the continuous ranked probability score \citep{gneiting2007strictly}. The energy score is a strictly proper scoring rule for the class of Borel probability measures on
$\Theta = \Real^d$ where $\E_{u \sim U}\Vert u \Vert^{\beta}_2$ is finite. Hence, using $\beta = 1$ assumes the posterior has finite mean.

The energy score is appealing as it can be approximated using Monte Carlo methods. Assuming we can  generate $N$ samples from~$U$, as in our case, then it is possible to construct an approximation to this scoring rule as 
\begin{equation}\label{eq:mcscore}
    \hat{S}(U,\theta) = \frac{1}{ N} \sum_{i=1}^{N} \left(\frac{1}{2}  \left\Vert u_i - u_{k_i} \right\Vert^\beta_2 - \left\Vert u_i - \theta \right\Vert^\beta_2 \right), 
\end{equation}
where $u_{i} \sim U$ for $i \in \{1,\ldots,N\}$ and $k$ is a random variable, uniformly distributed over permutation vectors of length $N$.

If samples from $U$ can be generated exactly, then the approximation will be unbiased and consistent if $\E_{u \sim U}\Vert u \Vert^\beta_2 < \infty$, while inexact Monte Carlo samples from $U$ (generated by SMC or MCMC for example) will lead to a consistent estimator under the same condition.

\subsection{Approximate Posterior Transformations}

Bayesian score calibration requires a family of kernels to optimize over, $\kernfam$. We consider the family defined by conditional deterministic transformations of random variables drawn from approximate posterior distributions, that is, a pushforward measure\footnote{Also known as ``transformations of measures'' \citep[see for example,][p.\ 185]{billingsley1995probability}.} of the approximate posterior conditional on the data. This choice of kernel family allows efficient inference in our current context of expensive or intractable likelihoods, when only Monte Carlo samples from the approximate posterior are available.

We use a class of pushforward kernels, analogous to pushforward measures but with additional parameters. Consider a probability measure $\nu$ on $(\Theta, \vartheta)$. We write the pushforward measure of $g$ on~$\nu$ as $\push{g}\nu$ when $g$ is a measurable function on $(\Theta, \vartheta)$. Now consider the Markov kernel $M$ from $(\mathsf{Y},\mathcal{Y})$ to $(\Theta,\vartheta)$ and function $f: \mathsf{Y} \times \Theta \rightarrow  \Theta$ such that $f_y(\cdot) = f(y, \cdot)$ is a measurable function on $(\Theta, \vartheta)$ for each $y \in \mathsf{Y}$. We define the pushforward kernel $\push{f}M$
by stating the pushforward kernel emits (i) a conditional probability measure $\push{f}M(\cdotmid y) = M(f_y^{-1}(\cdot) \cd y)$ for fixed $y \in \mathsf{Y}$ and (ii) a function $\push{f}M(B\cd \cdot\,) = M(f_y^{-1}(B) \cd \cdot\,)$ for fixed $B \in \vartheta$. The dependence of the function $f$ on $y$ allows information from the data to inform the transformation.

The family of kernels we consider can be described as $\kernfam = \{\push{f}\hat\Pi: f \in \mathcal{F}\}$ where $\mathcal{F}$ is some family of functions.  Under such a family of kernels, we now express our idealized optimization problem as
\begin{equation}\label{eq:transfoptim}
        f^\star = \argmax_{f \in \mathcal{F}} \E_{\theta \sim \bar\Pi} \E_{\sy \sim P(\cdot\mid \theta)}\left[ w(\theta,\sy) S(\push{f}\hat\Pi(\cdotmid \sy),\theta) \right]. 
\end{equation}
One appeal of this family of pushforward (approximate) posteriors is that we only need to sample approximate draws from~$\hat\Pi(\cdotmid \sy)$ once for each~$\sy$, after which samples from~$\push{f}\hat\Pi(\cdotmid \sy)$ can be generated by applying the deterministic transformation to the set of approximate draws.

If the approximate model is computationally inexpensive to fit, then generating samples with the pushforward will also be inexpensive. This cost is also predetermined, since we fix the number of calibration data sets and the approximate posteriors only need to be learned (or sampled from) once. Moreover, once the transformation $f^\star$ is found, samples from the adjusted approximate posterior can be generated by drawing from the approximate model with observed data~$y$ and applying $f^\star$.

\subsubsection{Kernel Richness from Approximate Posterior Transformations}
To recover the true posterior, Theorem~\ref{th:SBIresult} requires that the family of kernels is sufficiently rich.
A transformation family $\mathcal{F}$ will be sufficiently rich if there exists $f \in \mathcal{F}$ such that $\push{f}\hat\Pi(\cdotmid \sy) = \Pi(\cdotmid \sy)$ almost surely for~$\sy \sim Q$. Therefore, the richness will depend on the class $\mathcal{F}$ and the approximate posterior $\hat\Pi(\cdotmid \sy)$. In applications it may be difficult to specify practical families that meet this criterion. Specifically, in our context of expensive model simulators we judge a practical family as one that is parametric with relatively few parameters---thus requiring fewer calibration data sets to be simulated (and samples from each approximate posterior).
%have Gaussian limiting distributions with $\Theta \subseteq \Real^d$, that is

We can also consider the richness of certain approximate posterior transformation families asymptotically. Consider realizations of the target posterior $\theta_{\sy,n} \sim \Pi(\cdotmid\sy_{1:n})$ and approximate posterior $\hat{\theta}_{\sy,n} \sim \hat\Pi(\cdotmid\sy_{1:n})$ for some $\sy_{1:n} \sim P(\cdotmid \bar\theta)$, such that
\begin{equation*}
    \sqrt{n}(\theta_{\sy,n} - \mu_{\bar\theta}) \rightarrow \dNorm(0, \Sigma_{\bar\theta}),\quad
    \sqrt{n}(\hat\theta_{\sy,n} - \hat\mu_{\bar\theta}) \rightarrow \dNorm(0, \hat\Sigma_{\bar\theta}),
\end{equation*} as $n \rightarrow \infty$ in distribution for some $\mu_{\bar\theta}, \hat\mu_{\bar\theta} \in \Real^d$ and fixed $\bar\theta\in \Real^d$. Choosing the transformation $f_{\bar\theta}(\theta) = L_{\bar\theta}[\theta - \hat\mu_{\bar\theta}] + \hat\mu_{\bar\theta} + b_{\bar\theta}$ ensures that 
\begin{equation*}
    \sqrt{n}(f_{\bar\theta}(\hat\theta_{\sy,n}) - \hat\mu_{\bar\theta} - b_{\bar\theta}) \rightarrow \dNorm(0, L_{\bar\theta}\hat\Sigma_{\bar\theta}L_{\bar\theta}^\top),
\end{equation*}
for some $b_{\bar\theta} \in \Real^d$ and $L_{\bar\theta} \in \Real^{d\times d}$. To recover the true posterior asymptotically with our method (by ensuring sufficient richness) we require $b^\star_{\bar\theta} = \mu_{\bar\theta} - \hat\mu_{\bar\theta}$ and $L^\star_{\bar\theta} = \Sigma^{1/2}_{\bar\theta}\hat\Sigma^{-1/2}_{\bar\theta}$ with $\bar\theta$ varying. As such, $\mathcal{F}$ must contain the function~$f_{\bar\theta}(\theta) = L^\star_{\bar\theta}(\theta - \hat\mu_{\bar\theta}) + \hat\mu_{\bar\theta} + b^\star_{\bar\theta}$ where $\hat\mu_{\bar\theta}$, $b^\star_{\bar\theta}$, and $L^\star_{\bar\theta}$ vary in $\bar\theta$. Practically speaking, $\hat\mu_{\bar\theta}$ can be estimated from the approximate posterior and~$\bar\theta$ can be estimated from the simulated data $\sy$. Hence both are conditional on $\sy$, yielding the approximate posterior mean $\hat\mu_{\sy}$ and estimator $\theta^\ast_{\sy}$ respectively. In this case, the class %define \hat\mu(\tilde{y})
\begin{equation}\label{eq:asysuffnormal}
\mathcal{F} = \{f: f_{\sy}(\theta) = L(\theta^\ast_{\sy})[\theta - \hat\mu_{\sy}] + \hat\mu_{\sy} + b(\theta^\ast_{\sy}), b \in \mathcal{B}, L \in \mathcal{L}\},    
\end{equation}
would define an (asymptotically) sufficiently rich family of transformations if $\hat\mu_{\sy_{1:n}} \rightarrow \hat\mu_{\bar\theta}$ and $\theta^\ast_{\sy_{1:n}} \rightarrow \bar\theta$ as $n \rightarrow \infty$, and $\bar\theta \mapsto b^\star_{\bar\theta} \in \mathcal{B}$ and $\bar\theta \mapsto L^\star_{\bar\theta} \in \mathcal{L}$. Thus simplifying an (asymptotically) sufficiently rich class to affine functions in $\theta$, where $L$ and $b$ are only functions of a consistent estimator of $\bar\theta$.

\subsubsection{Relative Moment-Correcting Transformation}\label{sec:transform}

For this paper, we choose to use a simple transformation that corrects the location and scale of the approximate posterior under each simulated data set by the same relative amount. Assuming each approximate posterior only needs a fixed location-scale correction is a strong assumption, slightly stronger than the form described in the previous section, but empirically we find this to be a pragmatic and effective choice. Moreover, we show how to monitor and detect violations of this assumption in Section~\ref{sec:calcheck} and justify its use asymptotically.  
Using more flexible transformation families would allow for the correction of poorer approximate posterior distributions, but also require more draws from the (potentially expensive) data-generating process to estimate the transformation. We leave the exploration of more flexible transformation families for future work.

We are motivated to consider moment-matching transformations \citep[see][for example]{Warne2022,lei2011moment,Sun2016momentmatching} due to the class of asymptotically sufficiently richness of kernels derived in \eqref{eq:asysuffnormal}.  However, instead of matching moments between two random variables, we correct multiple random variables by the same relative amounts. Let the mean and covariance of a particular approximate posterior, $\hat\Pi(\cdotmid \sy)$, be $\hat{\mu}(\sy)$ and $\hat{\Sigma}(\sy)$ for some data set $\sy$ and $\Theta \subseteq \Real^d$. We denote the relative change in location and covariance by $b \in \Real^d$ and $A \in \Real^{d\times d}$ respectively, where $A$ is the decomposition of a positive definite matrix $B$, that is $AA^\top = B$. The transformation is applied to realizations from the approximate posterior, $\theta \sim \hat\Pi(\cdotmid \sy)$, as
\begin{equation}\label{eq:momcorr}
    f(\sy, \theta) = A[\theta - \hat{\mu}(\sy)] + \hat{\mu}(\sy)+ b,
\end{equation}
for $\theta \in \Theta$. The mean and covariance of the adjusted approximate posterior, $\push{f}\hat\Pi(\cdotmid \sy)$, is~$\hat\mu_f(\sy) = \hat{\mu}(\sy) + b$ and $\hat\Sigma_f(\sy) = A \hat{\Sigma}(\sy) A^{\top}$, respectively. We consider the Eigen decomposition of $B$ to parameterize $A$. In particular, we take $A = VD^{1/2}$ such that $VV^\top = I$ and $D$ is a strictly positive diagonal matrix.

Unlike the family in \eqref{eq:asysuffnormal}, the relative moment-correcting transformations we consider restrict~$b$ and $A$ to be equal for all data sets $\sy$. To justify this, we make the following observation. If we use a function $v$ such that $v(\sy_{1:n}) \rightarrow \delta_{y_{1:n}}(\sy_{1:n})$ as $n\rightarrow \infty$, then our optimization \eqref{eq:tracoptim} will simplify to the original problem \eqref{eq:posterioroptim} asymptotically. In this regime,~$b$ and~$A$ will transform the approximate posterior of a single data set, $y_{1:n}$, correcting its mean and variance. Therefore, we can use $v$ to focus our calibration on the data at hand with the aim of making the optimization less sensitive to the insufficiency of $\kernfam$, and asymptotically sufficiently rich. Our default choice of weights, which we discuss in Section~\ref{sec:practicalweights}, imply such a property for $v$.

\subsection{Choice of Weighting Function}\label{sec:practicalweights}

There are two components of the weighting function $w$ that can be chosen to refine our estimate of the optimal approximate posterior through \eqref{eq:tracoptim}. The first is the importance distribution $\bar\Pi$ which we use to concentrate the samples of~$\theta$ around likely values of the posterior distribution conditional on the observed data $y$. The second choice is the change of measure function $v$, or \textit{stabilizing function}, which we use to stabilize the weighting function~$w$ after choosing~$\bar\Pi$. Using an importance distribution $\bar\Pi$ has been considered previously \citep[for example in][]{lueckmann2017flex,pacchiardi2022likelihood} but with the inclusion of a generic~$v$ in Theorem~\ref{th:SBIresult}, the importance weights can be stabilized in various ways. We discuss idealized and practical weight functions further in Appendix~\ref{sec:weightfunctions}.

We explore weight truncation, or clipping \citep{ionides2008truncated}, to ensure finite variance of the weights. In general, clipping the empirical weights is achieved by 
\begin{equation}\label{eq:clip}
    w^{(m)}_{\text{clip}} = \min\{w^{(m)},q_{1-\alpha}\},\quad m \in \{1,\ldots, M\},
\end{equation}
where the truncation value $q_{1-\alpha}$ is the $100(1-\alpha)$\% empirical quantile based on weights $\{w^{(m)}\}_{m=1}^M$. Letting $\alpha \in [0,1]$ depend on~$M$ such that $\alpha \rightarrow 0$ as $M \rightarrow \infty$ is sufficient for asymptotic consistency. However, full clipping \ie\ $\alpha = 1$ or equivalently \textit{unit weights}, will not satisfy this. Instead, we establish asymptotic consistency (in the size of the data set) for unit weights next, and demonstrate good empirical performance in Section~\ref{sec:examples}.

\subsubsection{Unit Weighting Function} \label{sec:uweights}

In this section we will consider the effect of approximating the weight function $w(\theta, \sy)$ with unit weights, \ie\ $\hat{w} = 1$. This can also be viewed as clipping with $\alpha = 1$ in \eqref{eq:clip}.  We consider weights of the form $w(\theta,\sy_{1:n}) = v(\sy_{1:n}) \pi(\theta) / \bar\pi(\theta)$
where the number of observations $n\rightarrow\infty$ and $r(\theta) = \pi(\theta) / \bar\pi(\theta)$. The results of this section are possible due to the stability function, $v$, which is free to be chosen without affecting the validity of the method, as established by Theorem~\ref{th:SBIresult}. We first consider the consistency of the unit weights, when a consistent estimator $\theta^{\ast}$ exists.
\begin{theorem}\label{th:unitweights-as}
Let $g(x) = \bar \pi(x) / \pi(x)$ for $x \in \Theta$. If there exists an estimator $\tilde{\theta}^{\ast}_n \equiv \theta^{\ast}(\sy_{1:n})$ such that  $\tilde{\theta}^{\ast}_n \asarrow z$ as $n \rightarrow \infty$ when~$\tilde y_i \simiid P(\cdotmid z)$ for $z \in \Theta$, and $g$ is positive and continuous at $z$ then the error when using $\hat{w} = 1$ satisfies
\begin{equation*}
    \hat{w} - w(\theta,\sy_{1:n}) \asarrow 0,
\end{equation*}
as $n \rightarrow \infty$ with choice of stabilizing function $v(\sy_{1:n}) = g(\tilde{\theta}^\ast_n)$.
\end{theorem}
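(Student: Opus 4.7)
The plan is to exploit the simple algebraic form the weight takes once the stabilizing function is substituted in. By definition, $r(\theta) = \pi(\theta)/\bar\pi(\theta) = 1/g(\theta)$, so with $v(\sy_{1:n}) = g(\tilde\theta^\ast_n)$ one obtains
\begin{equation*}
w(\theta, \sy_{1:n}) \;=\; v(\sy_{1:n})\, r(\theta) \;=\; \frac{g(\tilde\theta^\ast_n)}{g(\theta)}.
\end{equation*}
Since $\hat w = 1$, the claim $\hat w - w(\theta, \sy_{1:n}) \asarrow 0$ reduces to showing $g(\tilde\theta^\ast_n)/g(\theta) \asarrow 1$ as $n\to\infty$.

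Next, I would invoke the hypothesis with $z = \theta$. In the setting of Theorem~\ref{th:SBIresult} the data $\sy_{1:n}$ is drawn from $P(\cdot \cd \theta)$, and the asymptotic statement is to be interpreted conditionally on the generating parameter $\theta$; thus $\tilde y_i \simiid P(\cdot \cd \theta)$ and hence $\tilde\theta^\ast_n \asarrow \theta$ by assumption. Continuity of $g$ at $\theta$, combined with the continuous mapping theorem for almost sure convergence, yields $g(\tilde\theta^\ast_n) \asarrow g(\theta)$. Positivity of $g(\theta)$ then permits a second application of the continuous mapping theorem to the map $x \mapsto x/g(\theta)$, giving $g(\tilde\theta^\ast_n)/g(\theta) \asarrow 1$, which is the desired conclusion.

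The main obstacle is conceptual rather than technical: one must be explicit that the almost sure convergence is with respect to the conditional law of $\sy_{1:n}$ given $\theta$, so that the hypothesis (phrased for $\tilde y_i \simiid P(\cdot \cd z)$) applies with $z$ identified as the generating parameter $\theta$. The positivity of $g$ at $\theta$ is used both to keep the ratio well defined and to ensure that division by $g(\theta)$ is a continuous operation at the limit point. Beyond these bookkeeping points, the proof is a routine two-step application of the continuous mapping theorem.
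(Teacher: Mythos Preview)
Your proof is correct and follows essentially the same route as the paper's: both rewrite $w(\theta,\sy_{1:n})$ as $g(\tilde\theta^\ast_n)/g(\theta)$ and then apply the continuous mapping theorem to conclude convergence to $1$. The paper packages this into a single application using $h(x)=1-g(x)/g(z)$, whereas you split it into two steps, but the argument is the same; your explicit remark that the almost sure statement is conditional on the generating value $\theta=z$ is a welcome clarification.
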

Theorem~\ref{th:unitweights-as} establishes a strong consistency result; unit weights are a large sample approximation to the theoretically correct weights for some choice of stabilizing function $v$. A corresponding weak consistency result follows analogously. Crucially, a consistent estimator is not required to implement our method in practice. The existence of such an estimator simply ensures that unit weights are a valid choice asymptotically. From a practical standpoint, unit weights effectively remove the importance sampling component of \eqref{eq:tracoptim}, ensuring our approach is numerically stable, and practically appealing. A proof of Theorem~\ref{th:unitweights-as} is provided in Appendix~\ref{pr:unitweights-as}, whilst a central limit theorem for the unit weight approximation appears in Appendix~\ref{thpr:unitweights-clt}. 

\subsubsection{Importance Distribution}\label{sec:importance}
The importance distribution $\bar\Pi$ can be chosen to focus the calibration on regions of $\Theta$. For this paper we use the approximate posterior $\hat\Pi(\cdotmid y)$ to choose $\bar\Pi$. Specifically, we use the scale transformation
\begin{equation}\label{eq:impdist}
    \bar\Pi(\dd \theta) \propto \hat\pi(D^{-1}(\theta - \hat\mu) + \hat\mu \cd y)\dd \theta,
\end{equation}
where $\hat\pi(\cdotmid y)$ is the density of the approximate posterior $\hat\Pi(\cdotmid y)$, $D$ is a positive-definite diagonal matrix used to inflate the variance of the approximate posterior, and $\hat\mu$ is the estimated mean of the approximate posterior. We can draw samples from $\bar\Pi$ using the transformation $D(\theta^\prime - \hat\mu) + \hat\mu$ when $\theta^\prime \sim \hat\Pi(\cdotmid y)$.

With unit weights and importance distribution \eqref{eq:impdist}, the stabilizing function from Theorem~\ref{th:unitweights-as} will have the form
\begin{align*}
    v(\sy_{1:n}) &= \hat p(y_{1:n} \cd D^{-1}(\tilde{\theta}^\ast_n - \hat\mu) + \hat\mu) c(\tilde{\theta}^\ast_n), \\
    \text{where}~ & c(\tilde{\theta}^\ast_n) \propto \frac{\pi(D^{-1}(\tilde{\theta}^\ast_n - \hat\mu) + \hat\mu)}{\pi(\tilde{\theta}^\ast_n)},
\end{align*} % \hat\mu \rightarrow \hat{\theta}_0
which will behave as $v(\sy_{1:n}) \rightarrow \delta_{\hat{\theta}_0}(\tilde{\theta}^\ast_n)$ for $n\rightarrow\infty$ where $\hat{\theta}_0$ is the maximum likelihood estimator (MLE) from $\hat{p}(y_{1:n} \cd \cdot\,)$ as $n\rightarrow\infty$. Asymptotically, this would reduce the support of the importance distribution $Q$ to the manifold $\mathsf{M}_n = \{\sy_{1:n} \in \mathsf{Y}^n: \tilde{\theta}^\ast_n = \hat{\theta}_0\}$ and will depend on the sufficient statistics for the true posterior. On the manifold $\mathsf{M}_n$, each $\sy_{1:n}$ will have the same limiting distribution (if it exists) as the consistent estimator $\tilde{\theta}^\ast_n$ for each $\sy_{1:n}$ is constrained to be equal. This justifies our use of constant~$b$ and~$A$ to define our approximate posterior transformation family (when using unit weights) as there is only one true and one approximate posterior to correct for in this regime. Hence, transformations defined by \eqref{eq:momcorr} are asymptotically sufficiently rich under some mild conditions when using unit weights.

\subsection{Calibration Diagnostic}\label{sec:calcheck}

To assist using Bayesian score calibration in practice, we suggest a performance diagnostic to warn users if the adjusted approximate posterior is unsuitable for inference. The diagnostic detects when the learned transformation does not adequately correct the approximate posteriors from the calibration data sets. The diagnostic can be computed with trivial expense as it requires no additional simulations.

To elaborate, whilst executing Algorithm~\ref{alg:method}, we have access to the true data-generating parameter value $\bar\theta^{(m)}$ for each adjusted approximate posterior, $\push{f}^{\star}\hat\Pi(\cdotmid \sy^{(m)})$. Therefore, we can measure how these adjusted approximate posteriors perform (on average) relative to this true value. Various metrics could be used for this task, but we find the empirical coverage probabilities for varying nominal levels of coverage to be suitable.

Specifically, if $\text{Cr}(U,\rho)$ is a $(100\times\rho)$\% credible interval (or highest probability region) for distribution $U$ then we calculate the achieved coverage (AC) by estimating
\begin{equation}\label{eq:diagnostic}
    \text{AC}(\rho) = \Pr\left[\bar\theta \in \text{Cr}(\push{f}^{\star}\hat\Pi(\cdotmid \sy),\rho)\right],\quad\text{where}~\sy \sim P( \cdotmid \bar\theta).
\end{equation} 
for a sequence of $\rho \in (0,1)$, using pairs of $(\bar\theta^{(m)}, \{f^{\star}(\hat\theta^{(m)}_i)\}_{i=1}^{N})$ for $m \in \{1,\ldots,M\}$ generated by Algorithm~\ref{alg:method}. We refer to $\rho$ as the target coverage. Whilst this diagnostic can be calculated on the joint distribution of the posterior, for simplicity we will use the marginal version of \eqref{eq:diagnostic} resulting in a diagnostic for each parameter in the posterior. We forgo a multivariate diagnostic as the marginal version requires less user input (only the type of credible intervals to use) and multivariate versions are far more difficult to compute. For our experiments we use credible intervals with end-points determined by symmetric tail-probabilities and plot the miscoverage for an interval of target coverage levels $\rho \in [0.1,0.95]$. The miscoverage is $\text{MC}(\rho) = \text{AC}(\rho) - \rho$, where positive values indicate over-coverage, while negative values indicate under-coverage. 

The miscoverage diagnostic will be sensitive to failures of the method in the high probability regions of the importance distribution,~$\bar\Pi$. If one wishes to test areas outside this region or in specific areas, new pairs of transformed approximate posteriors and their data-generating value could be produced at the cost of additional computation. With respect to the importance distribution, this diagnostic will help to detect if the quality of the approximate distribution is insufficient, if the transformation family is too limited, if the weights have too high variance or if the optimization procedure otherwise fails (for example due to an insufficient number of calibration data sets). 

\subsection{Related Research}\label{sec:related}

\citet{lee2019calibration} and \citet{pmlr-v97-xing19a} develop similar calibration procedures to ours but to estimate the true coverage of approximate credible sets as a diagnostic tool or means to adjust their posterior. We adjust the approximate posterior samples directly based on their distribution rather than just correcting coverage. Relatedly,  \citet{menendez2014simultaneous} correct confidence intervals from 
approximate inference for bias and nominal coverage in the frequentist sense, and \citet{rodrigues2018recalibration} calibrate the entire approximate posterior based on similar arguments.

\citet{xing2020distortion} develop a method to transform the marginal distributions of an approximate posterior without expensive likelihood evaluation.  They estimate a distortion map, which, theoretically, transports the approximate posterior to the exact posterior (marginally).  Since the true distortion map is unavailable, \citet{xing2020distortion} learn the distortion map using simulated data sets, their associated approximate posteriors, and the true value of the parameter used to generate the data set (similar to our approach).  They fit a beta regression model to the training data, which consists of approximate CDF values as the response and the data sets (or summary statistics thereof) as the features.  \citet{xing2020distortion} learn the parameters of the beta distribution using neural networks.  Their approach ensures that the approximate posterior transformed with the estimated map reduces the Kullback--Leibler divergence to the true posterior.  However, in their examples, \citet{xing2020distortion} use $\mathcal{O}(10^6)$ simulations from the model of interest, in order to have a sufficiently large sample to train the neural network.   Another reason for the large number of model simulations is they only retain a small proportion of simulated data sets from the prior predictive distribution that are closest to the observed data, in an effort to obtain a more accurate neural network localized around the observed data.  Our method only requires generating $\mathcal{O}(10^2)$ data sets from the target model, and thus may be more suited to models where it is moderately or highly computationally costly to simulate.  Further, in their examples,  \citet{xing2020distortion} require fitting $\mathcal{O}(10^4)$ to $\mathcal{O}(10^5)$ approximate posteriors, whereas we only require $\mathcal{O}(10^2)$.  Thus our approach has a substantially reduced computational cost.

\citet{rodrigues2018recalibration} develop a calibration method based on the coverage property that was previously used in \citet{prangle2014diagnostic} as a diagnostic tool for approximate Bayesian computation \citep[ABC,][]{beaumont2002approximate,Sisson2018}.  Even though a key focus of \citet{rodrigues2018recalibration} is to adjust ABC approximations, the method can be used to recalibrate inferences from an approximate model.  Like \citet{xing2020distortion}, \citet{rodrigues2018recalibration} require a much larger number of model simulations and approximate posterior calculations compared to our approach.  Furthermore, \citet{rodrigues2018recalibration} require evaluating the CDF of posterior approximations at parameter values used to simulate from the target model.  Thus, if the surrogate model is not sufficiently accurate, the CDF may be numerically 0 or 1, and the corresponding recalibrated sample will not be finite.  The method of \citet{xing2020distortion} may also suffer from similar numerical issues.  Our approach using the energy score is numerically stable.

\citet{vandeskog2022adjusting} develop a post-processing method for posterior samples to correct composite and otherwise misspecified likelihoods that have been used for computational convenience. Their method uses a linear transformation to correct the asymptotic variance of the model at the estimated mode (or suitable point estimate). 
They show that their method can greatly improve the low coverage resulting from the initial misspecification. Their adjustment requires an analytical form for the true likelihood with first and second order derivatives. Our method does not require an analytical form for the true likelihood, nor calculable derivatives. Moreover, we derive an adjustment in the finite-sample regime. 

A related area is delayed acceptance MCMC \citep[\eg][]{sherlock2017adaptive} or SMC \citep[\eg][]{bon2021accelerating}.  In delayed acceptance methods, a proposal parameter is first screened through a Metropolis-Hastings (MH) step that depends only on the likelihood for the surrogate model.  If the proposal passes this step, it progresses to the next MH stage that depends on the likelihood of the expensive model, otherwise the proposal can be rejected quickly without probing the expensive likelihood.  Although exact Bayesian inference can be generated with delayed acceptance methods, they require a substantial number of expensive likelihood computations, which limits the speed-ups that can be achieved.  Our approach, although approximate, does not require any expensive likelihood calculations, and thus is more suited to complex models with highly computational expensive or completely intractable likelihoods. The idea of delayed acceptance is generalized with multifidelity methods in which a continuation probability function is optimized based on the receiver operating characteristic curve with the approximate model treated as a classifier for the expensive model~\citep{Prescott2020,Prescott2021,warne2022multifidelity}. 

 In other related work, \citet{Warne2022} consider two approaches, preconditioning and moment-matching methods, that exploit approximate models in an SMC setting for ABC. The preconditioning approach applies a two-stage mutation and importance resampling step that uses an approximate model to construct a more efficient proposal distribution that reduces the number of expensive stochastic simulations required. The moment-matching approach transforms particles from an approximate SMC sampler to increase particle numbers and statistical efficiency of an SMC sampler using the expensive model. Of these two methods, the moment-matching SMC approach is demonstrated to be particularly effective in practice. As a result, we use the moment-matching transformation to inform the moment-correcting transformation used in this work.

As for frequentist-based uncertainty quantification, \citet{warne2023generalised} develop a method for valid frequentist coverage for intractable likelihoods with generalized likelihood profiles, whilst \citet{muller2013risk} obtain valid frequentist properties in misspecified models based on sandwich covariance matrix adjustments.  \citet{frazier2022bayesian} use a similar adjustment to correct for a misspecified Bayesian synthetic likelihood approach to likelihood-free inference. A deliberate  misspecification of the covariance matrix in the synthetic likelihood is used to speed-up computation. Then a post-processing step compensates for the misspecification in the approximate posterior.  Their approach does not have the goal of approximating the posterior distribution for the correctly specified model, and the post-processing step performs only a covariance adjustment without any adjustment of the mean. 

Related work has considered learning the conditional density of the posterior as a neural network \citep{papamakarios2016fast,lueckmann2017flex, greenberg2019auto}. In these methods, the conditional density estimate is updated sequentially using samples from the current approximate posterior. Sequential neural posterior estimation, as it is called, is built upon by \citet{papamakarios2019seq} with a focus on learning a neural approximation to the likelihood, rather than the full posterior. We expect the theoretical framework we propose to be useful in these contexts for developing methods to stabilize the importance weights, or understanding existing attempts at this \citep[\eg][]{deistler2022truncated}.

\citet{pacchiardi2022likelihood} explore similar concepts and relate these to generative adversarial networks. Our work has related theoretical foundations, in that we use expectations over the marginal probability of the data to circumvent the intractability of the posterior, though we have a more general formulation. Moreover, we also focus on the case where our learned posterior is a correction of an approximate posterior we have access to samples from. As in \citet{pacchiardi2022likelihood} we are also concerned with scoring rules, in particular, the energy score. We refer to an extended review of machine learning approaches for likelihood-free inference surmised by \citet{cranmer2020frontier} for further reading.

\section{Examples} \label{sec:examples}

This section contains a number of empirical examples demonstrating Bayesian score calibration. A worked example to elucidate the elements of our framework is provided in Appendix~\ref{app:logit} for a Bayesian logistic regression. We defer a further three empirical examples to the supplementary materials \ref{sec:ex-gauss}--\ref{sec:lotka-abc} for brevity.

In the following examples we use $\beta = 1$ for the tuning parameter of the energy score, $M = 100$ or $M = 200$ as the number of calibration data sets and use the approximate posterior with scale inflated by a factor of $2$ as the importance distribution, unless otherwise stated. The examples in Section~\ref{sec:ex-ous} are tractable and inexpensive to run so that we can assess the performance of the calibration method against the true posterior on repeated independent data sets. The examples in Sections~\ref{sec:ex-lotka} and~\ref{sec:ex-mapk} have intractable likelihoods, so we do not perform an exact inference comparison but do compute coverage comparisons (since the truth is known) to investigate the results. A \texttt{julia} \citep{bezanson2017julia} package implementing our methods and reproducing our examples is available online, see Appendix~\ref{sec:code} for details.

For examples where we validate the performance of our model calibration procedure using new independent data sets we compare the adjusted (approximate) posterior with the original approximate posteriors, and true posteriors using the average (over the $M$ independent data sets); mean square error (MSE), bias of the posterior mean, posterior standard deviation, and the coverage rate of the nominal 90\% credible intervals.  For some marginal posterior samples $\{ \theta_j\}_{j=1}^J$, we approximate the MSE using
$\widehat{\text{MSE}} = \frac{1}{J}\sum_{j=1}^J (\theta_j - \theta)^2$,
where $\theta$ is the known scalar parameter value (\ie\ a particular component of the full parameter vector that we are adjusting).

\subsection{Ornstein–Uhlenbeck Process} \label{sec:ex-ous}

We first consider the Ornstein-Uhlenbeck (OU) process~\citep{Uhlenbeck1930}. The OU process, $\{X_t\}_{t \geq 0}$ for $X_t \in \Real$, is a mean-reverting stochastic process that is governed by the It\^{o} stochastic differential equation~(SDE)
\begin{equation}
\text{d}X_t = \gamma (\mu - X_t) \text{d}t + \sigma \text{d}W_t,
\label{eq:ousde}
\end{equation}
with mean $\mu$ and volatility of the process denoted by $\sigma$. The rate at which $X_t$ reverts to the mean is $\gamma$, whilst $W_t$ is a standard Wiener process. 

Given an initial condition, $X_0 = x_0$ at $t = 0$, we can obtain the distribution of the state at future time $T$ through the solution to the forward Kolmogorov equation (FKE) for \eqref{eq:ousde}. For the OU process, the FKE is tractable with the solution
\begin{align}
    X_T & \sim \dNorm\left(  \mu + (x_0 - \mu)e^{-\gamma T}, \frac{\sigma^2}{2\gamma}(1- e^{-2\gamma T}) \right).
    \label{eq:ousol}
\end{align}
However, we note that analytical results are not available for most SDE models and one must rely on numerical methods such as Euler-Maruyama schemes~\citep{Maruyama1955}. 

The next sections illustrate our method on two OU processes, the processes are one- and two-dimensional respectively. In the first example we approximate the likelihood using the limiting distribution of the OU process, whilst in the second we approximate the posterior distribution using variational inference. The first example uses independent transformations for each parameter and the second uses a multivariate transformation.

\subsubsection{Univariate OU Process with Limiting Distribution Approximation}
Take $X_T$ as defined in \eqref{eq:ousol} as the true model (and corresponding likelihood) in this example.  
For the observed data we take 100 independent realizations simulated from the above model with $x_0 = 10$, $\mu = 1$, $\gamma = 2$, $T=1$ and $\sigma^2 = 20$.  We assume that $x_0$ and $\gamma$ are known and we attempt to infer $\mu$ and $D = \sigma^2/2$.  We use independent priors where $\mu \sim \dNorm(0,10^2)$ and $D \sim \dExp(1/10)$ (parameterized by the rate). We sample and perform our adjustment over the space of $\log D$, but report results in the original space of $D$.  The limiting distribution $T \rightarrow \infty$ is the approximate model, $X_\infty  \sim \dNorm\left(\mu, \frac{\sigma^2}{2\gamma}\right)$, from which we define the approximate likelihood.  Clearly there will be a bias in the estimation of $\mu$.

% figure without bivariate contours:
\begin{figure}
		\centering
		\includegraphics[width=0.75\textwidth]{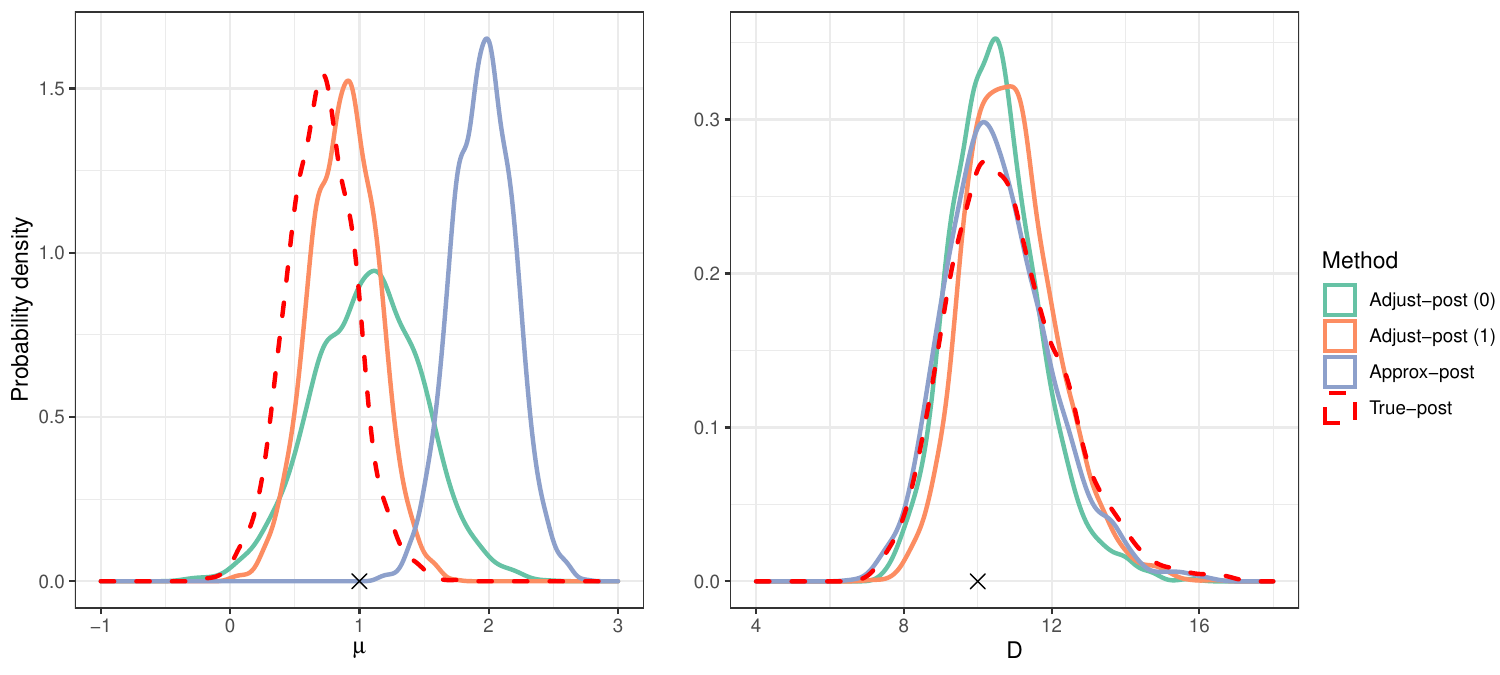}
		\caption{Univariate densities estimates of approximations to the OU Process model posterior distribution from a single simulation. The original approximate posterior (Approx-post) and adjusted posteriors (Adjust-post) with $(\alpha)$ clipping are shown with solid lines. The true posterior (True-post) is shown with a dashed line. The true generating parameter value is indicated with a cross $(\times)$.}
		\label{fig:posteriors_ou_limit}
\end{figure}

For this example we sample from the approximate and true posteriors using the \texttt{Turing.jl} library \citep{ge2018t} in \texttt{Julia} \citep{bezanson2017julia}. We use the default No-U-Turn Hamiltonian Monte Carlo algorithm \citep{hoffman2014no}. For simplicity, we set the stabilizing function $v(\sy) = 1$.  

\begin{longtable}{lrrrrrrrr}
\toprule
\multicolumn{1}{l}{} & \multicolumn{4}{c}{$\mu$} & \multicolumn{4}{c}{$D$}\\ 
\midrule
Method & MSE & Bias & SD & AC$^1$ & MSE & Bias & SD & AC$^1$\\
\midrule
Approx-post & $1.54$ & $1.21$ & $0.22$ & $0$ & $4.73$ & $0.18$ & $1.46$ & $85$ \\ 
Adjust-post (0) & $0.12$ & $0.15$ & $0.20$ & $64$ & $4.83$ & $0.28$ & $1.24$ & $72$\\ 
Adjust-post (0.5) & $0.12$ & $0.15$ & $0.23$ & $81$ & $5.08$ & $0.41$ & $1.42$ & $81$\\ 
Adjust-post (1) & $0.12$ & $0.15$ & $0.23$ & $82$ & $5.13$ & $0.42$ & $1.45$ & $83$\\ 
True-post & $0.12$ & $-0.01$ & $0.26$ & $94$ & $5.00$ & $0.37$ & $1.48$ & $85$\\ 
\bottomrule
\\
\caption{Average results for each parameter over 100 independent data sets for the univariate OU process example. The posteriors compared are the original approximate posterior (Approx-post) and adjusted posteriors (Adjust-post) with $(\alpha)$ clipping, and the true posterior (True-post). $^1$Achieved coverage with 90\% target.\label{tab:repeated_results_ou}}
\end{longtable}

Results for the mean squared error (MSE), bias, standard deviation (SD), and achieved coverage (AC, 90\%) are presented in Table \ref{tab:repeated_results_ou}.  It is clear that the approximate posterior performs poorly for $\mu$.  Despite this, the adjustment method is still able to produce results that are similar to the true posterior on average.  The approximate method already produces accurate inferences similar to the true posterior for $D$ so that the adjustment is negligible.  As an example, the posterior results based on running the adjustment process on a single data set are shown in Figure \ref{fig:posteriors_ou_limit}.

\subsubsection{Bivariate OU Process with Variational Approximation}
We can define a bivariate OU process by considering $\{Y_t\}_{t \geq 0}$ for $Y_t \in \Real^2$, such that the components are $Y_{t,1} = X_{t,1}$ and $Y_{t,2} = \rho X_{t,1} + (1-\rho)X_{t,2}$ where $X_{t,1}$ and $X_{t,2}$ are independent OU processes, conditional on shared parameters $(\mu,\gamma,\sigma)$, and governed by \eqref{eq:ousde}. The additional parameter $\rho \in [0,1]$ measures the correlation between $Y_{t,1}$ and $Y_{t,2}$.

\begin{figure}
		\centering
		\subfigure[Bivariate density contours of $\rho$ and $D$]{\includegraphics[width=0.47\textwidth]{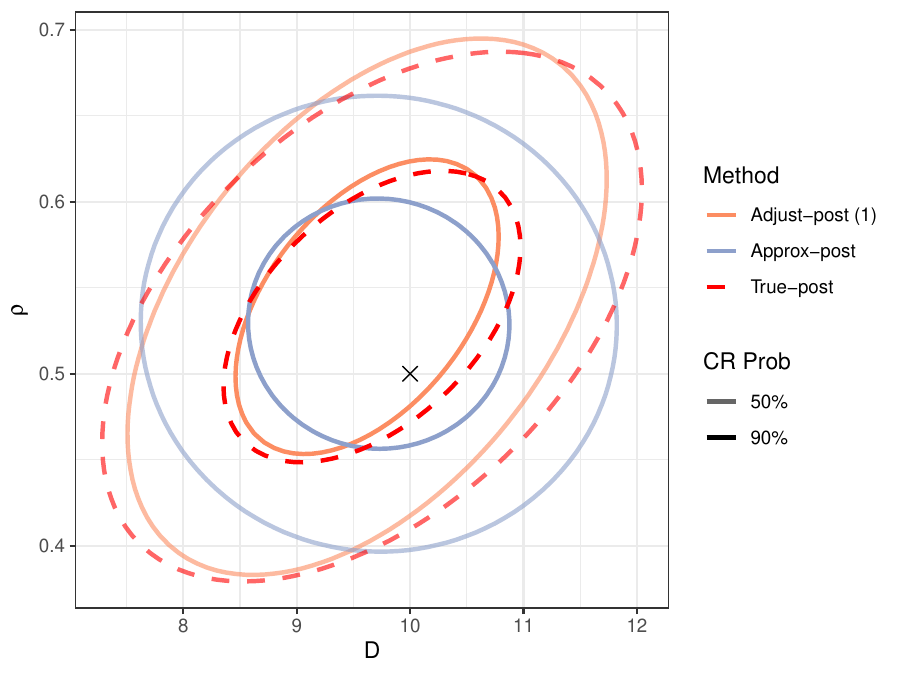}\label{figsub:results_corr_ou_contour}}
		\subfigure[Calibration diagnostic]{\includegraphics[width=0.47\textwidth]{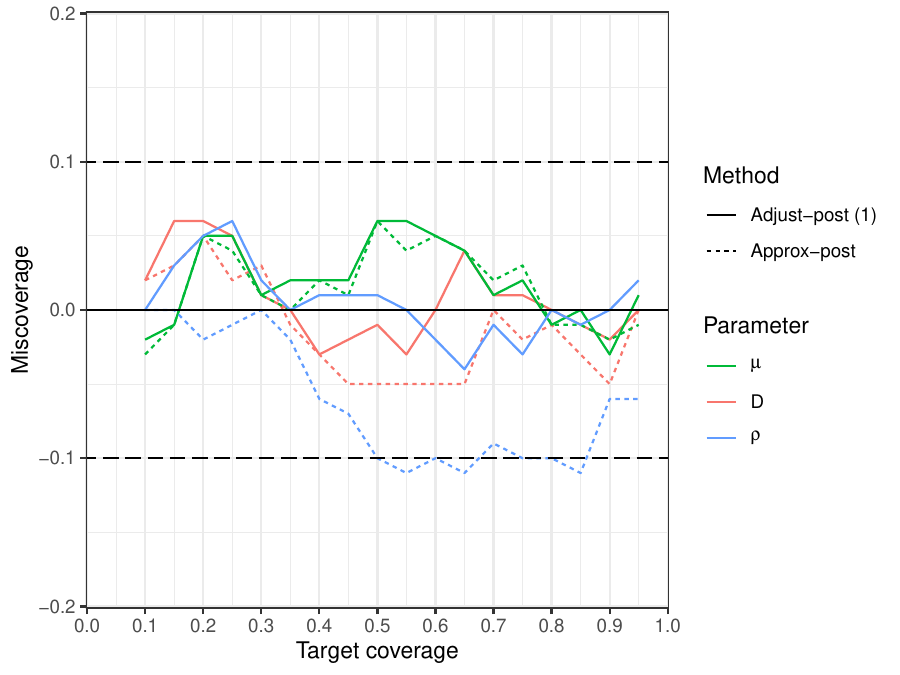}\label{figsub:results_corr_ou_calcheck}}
		\caption{Posterior summaries of the bivariate OU Process model from a single simulation. Plot (a) shows 50\% and 90\% credible region probability (CR Prob) contours from a Gaussian approximation to the bivariate density of $\rho$ and $D$. The original approximate posterior (Approx-post) and adjusted posteriors (Adjust-post) with $(\alpha)$ clipping are shown with solid lines. The true posterior (True-post) is shown with a dashed line. The true generating parameter value is indicated with a cross $(\times)$. Plot (b) is a calibration diagnostic showing the marginal miscoverage for all parameters (see Section~\ref{sec:calcheck}) for $\alpha = 1$ with $\pm 0.1$ deviation from parity shown with a dotted line. }
		\label{fig:posteriors_ou_vi}
\end{figure}

Again, we consider the true model for $X_{t,1}$ and $X_{t,2}$ to be defined by \eqref{eq:ousol}, therefore $(Y_{t,1}, Y_{t,2})$ have joint distribution that is bivariate Gaussian with correlation $\rho$. For the observed data we take 100 independent realizations simulated from the above model with $x_{0,1} = x_{0,2} = 5$, $\mu = 1$, $\gamma = 2$, $T=1$, $\sigma^2 = 20$, and $\rho = 0.5$.  We assume that $x_{0,1}$, $x_{0,2}$, and $\gamma$ are known and we attempt to infer $\mu$, $D = \sigma^2/2$, and $\rho$.  We use independent priors where $\mu \sim \dNorm(0,10^2)$, $D \sim \dExp(1/10)$, and $\rho \sim \dUnif(0,1)$. We use automatic differentiation variational inference \citep{kucukelbir2017automatic} for the approximate model with a mean-field approximation as the variational family as implemented in \texttt{Turing.jl}. We expect the correction from our method will need to introduce correlation in the posterior due to the independence inherited from the mean-field approximation.

The variational approximation of the bivariate OU posterior estimates the mean and variance well in this example. The univariate bias, MSE, and coverage metrics are very similar for the approximate, adjusted ($\alpha = 1$) and true posteriors (Table~\ref{tab:repeated_results_mv_ou} in Appendix~\ref{app:tabfigs}). However, the adjusted posterior with $\alpha = 0$ is poor due to high variance of the weights, perhaps due to the increased dimension of this example. This could be corrected using an appropriate stabilizing function, but we leave this for future research.

Despite the good univariate properties of the variational approximate posterior, the choice of a mean-field approximation cannot recover the correlation between the parameters, in particular, $\rho$ and $D$. Figure~\ref{figsub:results_corr_ou_contour} shows an example of the independence between $\rho$ and $D$ in the approximate posterior and how the adjusted posterior corrects this (from one data set). To investigate the method's ability to recover the correlation structure we monitor the empirical correlation between $\rho$ and $D$ over 100 independent trials. Adjusting the approximate posterior ($\alpha=1$) increased the mean correlation from 0.00 to 0.31, a major improvement compared to 0.41 for the true posterior. Further correlation summaries are provided in Table~\ref{tab:corr_bi_ou}.

Figure~\ref{figsub:results_corr_ou_calcheck}
illustrates the calibration diagnostic from Section~\ref{sec:calcheck} for the same simulation as Figure~\ref{figsub:results_corr_ou_contour}. We can see the learned transformation leads to a well-calibrated adjusted posterior as the miscoverage is close to the target coverage for the range considered, and improves the coverage compared to the original approximate posterior of $\rho$.

\subsection{Lotka-Volterra Model with Kalman Filter Approximation} \label{sec:ex-lotka}

We consider the Lotka-Volterra, or predator-prey, dynamics governed by a stochastic differential equation (SDE). Let $\{(X_t,Y_t)\}_{t \geq 0}$ be a continuous time stochastic process defined by the SDE
\begin{equation*}\label{eq:sde_lotka}
    \dd X_{t} = (\beta_1 X_{t} - \beta_2 X_{t} Y_{t} ) \dd t+ \sigma_1 \dd B_{t}^{1}, \quad
    \dd Y_{t} = (\beta_4 X_{t} Y_{t} - \beta_3 Y_{t} ) \dd t+ \sigma_2 \dd B_{t}^{2},
\end{equation*}
where $\{B_{t}^{k}\}_{t\geq 0}$ are independent Brownian noise processes for $k \in \{1,2\}$. For this example, we assume the pairs $(x_t,y_t)$ are observed without error at times $t \in \{0, 0.2, 0.4, \ldots, 6\}$, for a total of $n=31$ observations. We use initial values $X_0 = Y_0 = 1$ and simulate the observations with true parameter values $\beta_1 = 1.5,  \beta_2 = \beta_4 = 1.0, \beta_3 = 3.0$, and $\sigma_1 = \sigma_2 = 0.1$, using the SOSRI solver \citep{rackauckas2020stability}.

\begin{figure}
		\centering
		\subfigure[Posterior distribution comparison]{\includegraphics[width=0.47\textwidth]{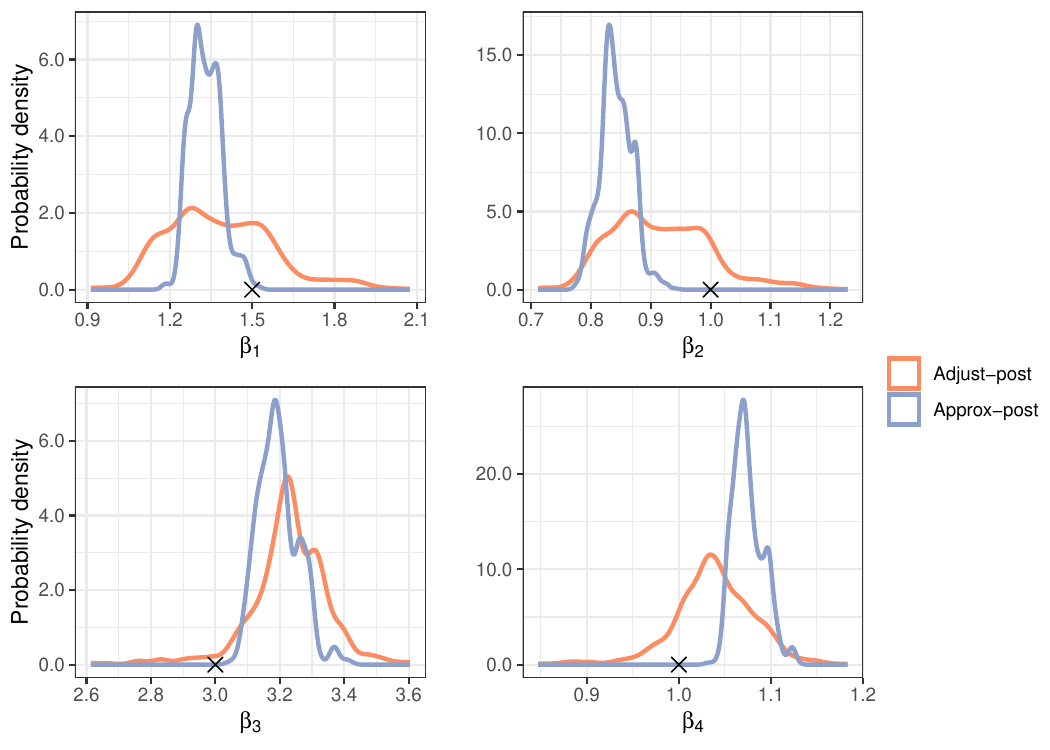}\label{figsub:results_lotka_ekf_post}}
		\subfigure[Calibration diagnostic]{\includegraphics[width=0.47\textwidth]{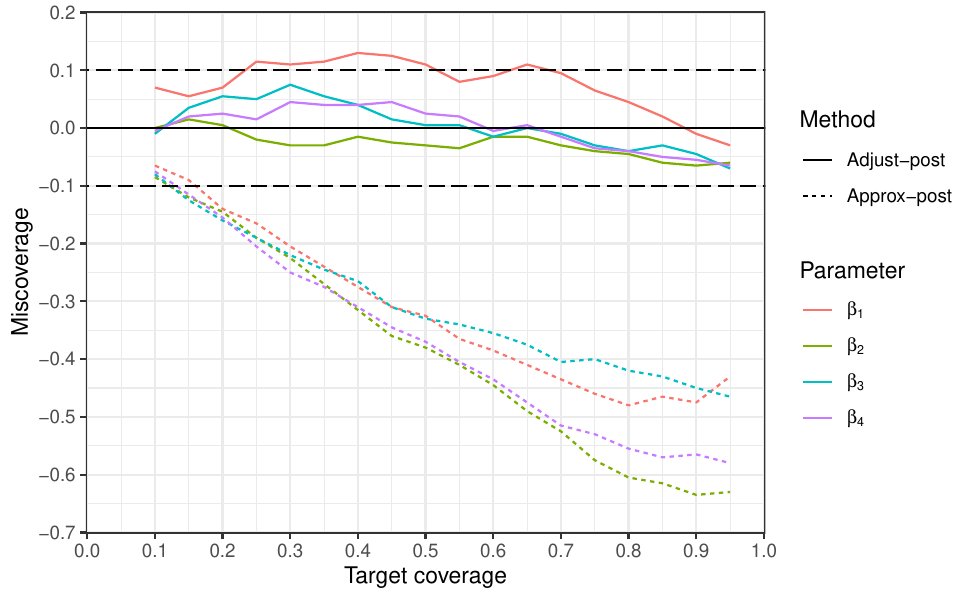}\label{figsub:figsub:results_lotka_ekf_cal}}
		\caption{Comparison of original approximate posterior (Approx-post) and adjusted posteriors (Adjust-post) for Lotka-Volterra example with EKF likelihood. Plot (a) shows the estimated marginal posterior densities, with true generating parameter value indicated with a cross $(\times)$. Plot (b) is a calibration diagnostic showing the marginal miscoverage for all parameters with $\pm 0.1$ deviation from parity shown with a dotted line.}
		\label{fig:posteriors_lotka_ekf}
\end{figure}

We define our target posterior with priors $\beta_i \sim \mathcal{U}(0.1,4)$ iid for $i\in\{1,2,3,4\}$ and $\sigma_{j} \sim \mathcal{U}(0.01,0.25)$ iid for $j\in\{1,2\}$. We use a continuous-discrete Extended Kalman Filter \citep[EKF,][]{jazwinski2007stochastic} as the approximate likelihood and draw samples using the No-U-Turn Sampler \citep[NUTS,][]{hoffman2014no} for Hamiltonian Monte Carlo \citep[HMC,][]{neal2011mcmc} algorithm from \texttt{Turing.jl}. For the prediction step, the EKF requires numerical integration of the moment equations and we use a Euler-Maruyama scheme with step-size $\Delta t = 0.05$ \citep[see][for an overview]{frogerais2011various}. The EKF also requires specification of the observation noise, which we choose to be $(x_t,y_t) \sim \mathcal{N}((X_t,Y_t), \tau^2 I)$, where the prior for the standard deviation is the positive truncated normal distribution, $\tau \sim \mathcal{N}^{+}(0,0.05^2)$, chosen to favor the low noise regime we are simulating data from.

We use $M=200$ calibration samples, unit weights (\ie\ $\alpha=1$), and draw 1000 samples for the approximate posteriors. In this example, the adjusted posterior does significantly better than the approximate posterior. Figure~\ref{figsub:figsub:results_lotka_ekf_cal} shows the marginal approximate posteriors have extreme under-coverage which is corrected by our method. In Figure~\ref{figsub:results_lotka_ekf_post} we observe strong bias in parameters~$\beta_2$ and~$\beta_4$ which is removed, whilst the variance of $\beta_1$ is inflated to ensure coverage. Whilst the marginal distribution of $\beta_3$ appears mostly unchanged, the adjustment does ensure probability mass at the true value, where the approximate posterior had none.

\subsection{Stochastic Chemical Kinetic Model} \label{sec:ex-mapk}

Finally, we demonstrate our method on a difficult biological model with intractable likelihood and parameter non-identifiability. The chemical reaction network we consider is a two-step Mitogen Activated Protein Kinase (MAPK) enzymatic cascade \citep{dhananjaneyulu2012noise}. Such reaction networks are often used as components of larger systems to model cell signaling processes \citep{brown2004statistical,oda2005comprehensive}.

\begin{figure}
		\centering
		\subfigure[Posterior distribution comparison]{\includegraphics[width=0.47\textwidth]{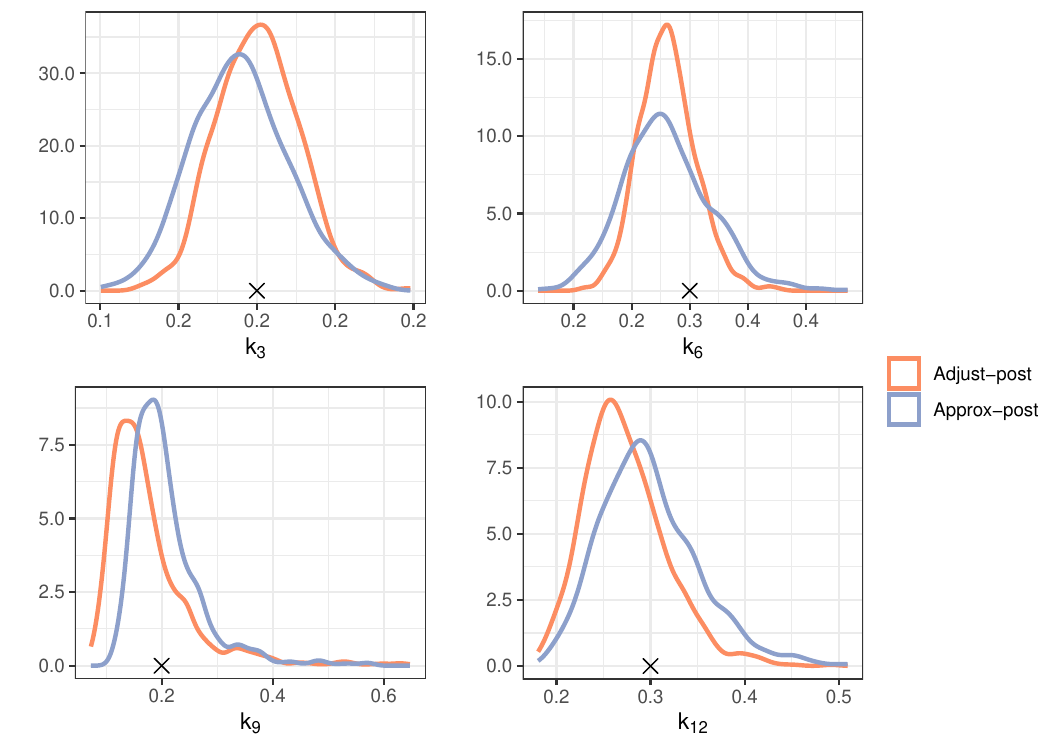}\label{figsub:results_mapk_ekf_post}}
		\subfigure[Calibration diagnostic]{\includegraphics[width=0.47\textwidth]{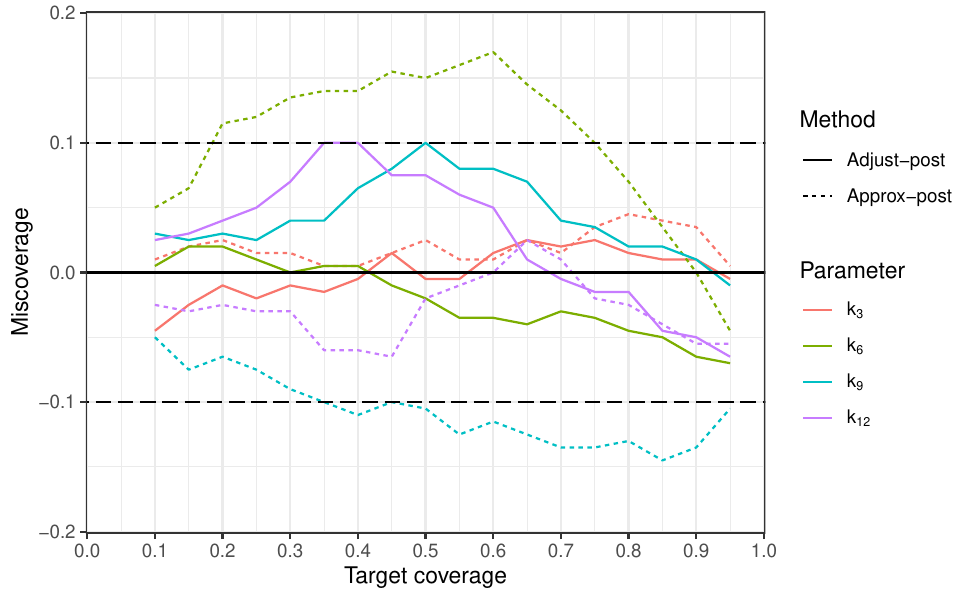}\label{figsub:figsub:results_mapk_ekf_cal}}
		\caption{Comparison of original approximate posterior (Approx-post) and adjusted posteriors (Adjust-post) for reaction network example with EKF likelihood. Plot (a) shows the estimated marginal posterior densities, with true generating parameter value indicated with a cross $(\times)$. Plot (b) is a calibration diagnostic showing the marginal miscoverage for all parameters with $\pm 0.1$ deviation from parity shown with a dotted line.}
		\label{fig:posteriors_mapk_ekf}
\end{figure}

The two-step MAPK model is a stochastic chemical kinetic model for proteins $X$ and $Y$, activated (phosphorylated) proteins $X^a$ and $Y^a$, enzyme $E$, and phosphatase molecules $P_1$ and $P_2$. The two-step MAPK reaction network governs the phosphorylation and dephosphorylation of $X$ and $Y$ by coupled Michaelis–Menten components that can be expressed as 
\begin{equation}\label{eq:mapk}
\begin{aligned}
X + E &\overset{k_1}{\longrightarrow} [XE],& [XE] &\overset{k_2}{\longrightarrow} X + E,& [XE] &\overset{k_3}{\longrightarrow} X^{a}+ E, \\
X^{a} + P_1 &\overset{k_4}{\longrightarrow} [X^{a}P_1],& [X^{a}P_1] &\overset{k_5}{\longrightarrow} X^{a} + P_1,& [X^{a}P_1] &\overset{k_6}{\longrightarrow} X + P_1,\\
X^{a} + Y &\overset{k_7}{\longrightarrow} [X^{a}Y], & [X^{a}Y] &\overset{k_8}{\longrightarrow} X^{a} + Y, & [X^{a}Y] &\overset{k_9}{\longrightarrow} X^{a} + Y^{a},\\
Y^{a} + P_2 &\overset{k_{10}}{\longrightarrow} [Y^{a}P_2], & [Y^{a}P_2] &\overset{k_{11}}{\longrightarrow} Y^{a} + P_2, & [Y^{a}P_2] &\overset{k_{12}}{\longrightarrow} Y + P_2.
\end{aligned}    
\end{equation}
Equation \eqref{eq:mapk} describes the processes of $X$ activating (to $X^a$) by compounding with $E$, and deactivating similarly by $P_1$, whilst $Y$ (to $Y^a$) can be activated by compounding with $X^a$ and deactivated similarly by $P_2$. Hence, there is a cascading effect, or two steps, where $X$ must be activated in order for $Y$ to be activated. The network also allows for the compounds to form without an eventual activation or deactivation taking place. The parameters $k_1,k_2,\ldots,k_{12}$ dictate the rate of the reactions in the system and are the parameters of interest.

Following \citet{warne2022multifidelity} we assume that only the activated proteins are observable with additive normal noise~$(x^a_t, y^a_t) \sim \mathcal{N}((X^{a}_t, Y^{a}_t), \sigma^2 I$), with observations taken at $t \in \{4,8,\ldots,200\}$, known initial counts $E = 94$, $X = 757$, $Y = 567$, $P_1 = P_2 = 32$, and remaining counts zero at $t=0$. The partial observation of proteins leads to parameter unidentifiability, hence we fix $k_1 = k_4 = k_{10} = 10^{-3}$, $k_7 = 10^{-4}$, whilst assigning the following priors~$k_i \sim \mathcal{U}(0,10^{-3})$ iid for $i \in \{2,5,11\}$, $k_j \sim \mathcal{U}(0,1)$ iid for $j \in \{3,6,9,12\}$, and $k_{8} \sim \mathcal{U}(0,10^{-4})$. Using the aforementioned initial conditions and noise $\sigma = 1$ we simulate the observations using the Gillespie's direct method \citep{gillespie1992rigorous}, with parameters $k_1 = k_4 = k_{10} = 10^{-3}$, $k_2 = k_1/120$, $k_3=0.18$, $k_5 = k_4/22$, $k_6 = k_{12} = 0.3$, $k_7 = 10^{-4}$, $k_8 = k_7/110$, $k_9 = 0.2$, and $k_{11} = k_{10}/22$.

The approximate likelihood is defined by two layers of approximations in this example. Firstly, we make a diffusion approximation to the discrete-state continuous-time chemical reaction network, resulting in an SDE.  Then, as in Section~\ref{sec:ex-lotka}, we use the EKF to approximate the SDE's likelihood. We take $\Delta t = 1$ for the intermediary integration times between observations in the EKF, and sample 1000 draws using NUTS-HMC for each approximate posterior. We use $M=200$ calibration samples, unit weights (\ie\ $\alpha=1$), and scale the variance of the approximate posterior (with the observed data) by a factor of $1.5$ to define $\bar\Pi$. We correct the marginal posterior distribution of $(k_3,k_6,k_9,k_{12})$ as the remaining parameters are unidentifiable from the data.

From Figure~\ref{figsub:results_mapk_ekf_post} we can see the correction to the approximate posterior is modest compared to previous examples, indicating the EKF may provide a reasonable approximate likelihood for this model and data set. However, Figure~\ref{figsub:figsub:results_mapk_ekf_cal} provides a strong motivation for calculating and using the adjusted posterior as the over-coverage of $k_6$ and under-coverage of $k_9$ have been corrected to within $\pm 0.1$ for all target coverage values. Despite the unknown properties of the EKF approximation to the MAPK model (and other reaction networks), the calibration diagnostic provides reassurance that this approximation can be reasonable for this data set after the score calibration correction is made.

\section{Discussion} \label{sec:discussion}

In this paper we have presented a new approach for modifying posterior samples based on an approximate model or likelihood to improve inferences with respect to some complex target model.  Our approach does not require any likelihood evaluations of the target model, and only a small number of target model simulations and approximate posterior computations, which are easily parallelizable.  Our approach is particularly suited to applications where the likelihood of the target model is completely intractable, or if the surrogate likelihood is several orders of magnitude faster to evaluate than the target likelihood. We also demonstrated in Section~\ref{sec:ex-mapk} that several layers of approximation could be corrected for.

We focused on correcting inferences from an approximate model, but our approach can also be applied when the inference algorithm is approximate.  For example, we could use our approach to adjust inferences from likelihood-free algorithms or to correct the bias in short MCMC runs.  We plan to investigate this in future research.

We propose a straightforward clipping method when we wish to guarantee finite weights in our importance sampling step. However, a more sophisticated approach to clipping is possible, namely Pareto smoothed importance sampling \citep[PSIS,][]{vehtari2015pareto}. PSIS can be used instead of clipping or unit weights if desired, but we leave investigation of this strategy for future work. 

In general, our method can be used with any proper scoring rule. We concentrated on the energy score because of the ease with which it can be estimated using transformed samples from the approximate distribution. The logarithmic score could also be used by calculating a kernel density estimate from the adjusted posteriors. We found that using $\beta = 1$ for the energy score provided good results in our experiments, but it may be of interest to try $\beta \neq 1$ in future work. 

Using scoring rules that are not strictly proper would alter the objective of Theorem~\ref{th:SBIresult} to recovering certain properties of the target posterior, rather than the entire target posterior itself. Such a choice of scoring rule would simplify the class of sufficiently rich kernels, both asymptotically and non-asymptotically, at the expense of not attempting to recover the full target posterior approximately. This trade-off represents an interesting research direction we are pursuing.

In ongoing work we are considering more flexible transformations when warranted by deficiencies in the approximate posterior. This will be particularly important when the direction of the bias in the approximate model changes in different regions of the parameter space. However, learning more flexible transformations will likely necessitate more calibration samples and data sets, and increased optimization time.  We note that our method with the energy score does not require invertible or differentiable transformations, making it quite flexible compared to most transformation-based inference algorithms which require differentiability.

Another limitation of our approach is that we do not expect our current method to necessarily generate useful corrections when the approximate posterior is very poor. An example where this occurs, to some extent, in a Lotka-Volterra model is provided in Supplement~\ref{sec:lotka-abc}. A poor approximation would likely lead to a family of kernels that is not sufficiently rich. If the correcting transformation is learned in a region of the parameter space far away from the true posterior mass, the calibration data sets are likely to be far away from the observed data, and the transformation may not successfully calibrate the approximate posterior conditional on the observed data. We did provide arguments for when we expect moment-correcting transformations to be asymptotically sufficiently rich and a practical method for detecting insufficiency. However, as already alluded to, a more flexible transformation may be able to calibrate more successfully across a wider set of parameter values (\eg\ the prior or inflated version of the approximate posterior).  We also note that in this paper we assume that the complex target model is correctly specified, and an interesting future direction would consider the case where the target model itself is possibly misspecified.

% Acknowledgements and Disclosure of Funding should go at the end, before appendices and references

\acks{JJB was supported by a First Byte Grant from the Centre for Data Science (Queensland University of Technology) and the European Union under the GA 101071601, through the 2023-2029 ERC Synergy grant OCEAN. DJW is supported by an ARC Discovery Early Career Researcher Award (DE250100396). DJN was supported by the Ministry of Education, Singapore, under the Academic Research Fund Tier 2 (MOE-T2EP20123-0009). CD was supported by an Australian Research Council Future Fellowship (FT210100260).}

% Manual newpage inserted to improve layout of sample file - not
% needed in general before appendices/bibliography.

\appendix

\section{Illustration with Logistic Regression}\label{app:logit}

In this section, we illustrate Bayesian score calibration on a simple worked example to elucidate the framework. Consider binary data $y \in \{0,1\}^n$ and fixed covariate matrix $X \in \Real^{n\times p}$. Suppose we wish to fit a logistic regression with a default prior $\Pi$ for parameter vector $\theta \in \Real^p$ \citep[\eg][]{gelman2008} related to the data by $\mathbb{P}(y = 1) = \frac{1}{1 + \exp(-X\theta)}$. Such a regression would typically involve some type of Monte Carlo sampling algorithm to draw approximate samples from the posterior distribution. Instead, assume that we are interested in using a fast approximation of the posterior distribution that is subsequently calibrated using our method.

Suppose our approximate posterior is generated using the Laplace approximation to a Bayesian logistic regression with flat priors. For a given data set $\sy \in \{0,1\}^n$ this approximation is a multivariate normal distribution $\hat{\Pi}(\cdotmid \sy) = \mathcal{N}(\hat\theta(\sy),\hat\Omega(\sy)^{-1})$ with approximate mean,~$\hat\theta(\sy)$ the logistic regression MLE, and approximate precision,~$\hat\Omega(\sy)$ the Hessian matrix of the negative log-likelihood at the MLE. Both the mean and precision are functions of the simulated data~$\sy$ with fixed covariate matrix $X$. %This specification can be easily computed by most general-purpose statistical software.

We calibrate the Laplace approximation with a correction from the relative moment-correcting transformation family, described in Section~\ref{sec:transform} and denoted here by $\mathcal{F}_{\text{m}}$. This implies that our family of kernels $\kernfam_{\text{m}}$ is described by transformations, $f \in \mathcal{F}_{\text{m}}$, applied to the Laplace approximation for data $\sy$. As such, the best calibrated posterior kernel is selected from
\begin{equation*}
    \kernfam_{\text{m}} = \{\push{f}\hat\Pi: f \in \mathcal{F}_\text{m}\} = \{\mathcal{N}(\hat\theta(\cdot)+b,A\hat\Omega(\cdot)^{-1}A^\top):b\in\Real^p, A\in \mathcal{A}\}.
\end{equation*}
The space $\mathcal{A} \subset \Real^{p\times p}$ is specified in Section~\ref{sec:transform} and ensures the optimal $A$ is unique. 

Several further choices are required to instantiate the Bayesian score calibration framework. 
We use the energy scoring rule with $\beta = 1$, unit stabilizing function, and importance distribution, $\bar{\Pi} = \mathcal{N}(\hat\theta(y),2\hat\Omega(y)^{-1})$, based on the Laplace approximation of the true data, with variance rescaled by $2$.
With this specification, the idealized objective function is
\begin{equation}\label{eq:logisticobj}
    \E_{\theta \sim \bar\Pi} \E_{\sy \sim P(\cdot\mid \theta)}\left\{ r(\theta) \left[\frac{1}{2}\E_{u,u^{\prime} \sim \hat{\Pi}(\cdot\mid y)}\Vert f(u) - f(u^{\prime})\Vert_2  - \E_{u \sim \hat{\Pi}(\cdot\mid y)}\Vert f(u) - \theta \Vert_2 \right]\right\},
\end{equation}
where $r(\theta)$ is the importance weight of prior $\Pi$ to importance distribution $\bar\Pi$, and $\sy \sim P(\cdotmid\theta)$ is the data-generating process implied by the logistic regression. That is, $\mathbb{P}(\sy = 1) = \frac{1}{1 + \exp(-X\theta)}$ and $\mathbb{P}(\sy = 0) = 1-\mathbb{P}(\sy = 1)$ with $X$ fixed by assumption. Maximizing \eqref{eq:logisticobj} for $f\in \mathcal{F}_\text{m}$ yields a correction for the Laplace approximation that is better calibrated to the true posterior distribution (on average, according to the energy score). Finally, using the observed data $y$ and optimal parameters~$(b^\star,A^\star)$ from \eqref{eq:logisticobj}, the calibrated approximate posterior is $\mathcal{N}(\hat\theta(y)+b^\star,A^\star\hat\Omega(y)^{-1}A^{\star\top})$.

To run the Bayesian score calibration algorithm in practice, we construct a Monte Carlo approximation to the objective function \eqref{eq:logisticobj} and find the maximizer $f\in \mathcal{F}_\text{m}$ by optimizing over $b \in \Real^p$ and $A\in\mathcal{A}$. 

\section{Additional Theorems and Proofs}\label{sec:proofs}

\subsection{Proof of Theorem~\ref{th:SBIresult}}\label{pr:SBIresult}
\begin{proof}
Denote the objective function in \eqref{eq:tracoptim} as $E(K)$. Using the Radon-Nikodym derivative $\dd\Pi / \dd \bar\Pi$ we can rewrite this as
\begin{equation*}
    E(K) = \E_{\theta \sim \Pi} \E_{\sy \sim P(\cdotmid \theta)}\left[v(\sy) S(K(\cdotmid \sy),\theta) \right] = \int \Pi(\dd \theta) P(\dd \sy \cd \theta) v(\sy) S(K(\cdotmid \sy),\theta),
\end{equation*}
then substituting $\Pi(\dd \theta) P(\dd \sy \cd \theta) = P(\dd \sy) \Pi(\dd \theta \cd \sy)$ we find that 
\begin{equation*}
    E(K) =  Z_v \E_{\sy \sim Q}\E_{\theta \sim \Pi(\cdotmid \sy)} \left[ S(K(\cdotmid \sy),\theta) \right],
\end{equation*}
where $Q(\dd \sy) = P(\dd \sy)v(\sy)/Z_v$, with $Z_v = P(v)\in (0, \infty)$ by assumption.

Now for maximizing $E(K)$, we can ignore the constant $Z_v$, and find that
\begin{align*}
    K^\star = \argmax_{K \in \kernfam} E(K) &= \argmax_{K \in \kernfam} \E_{\sy \sim Q}\E_{\theta \sim \Pi(\cdotmid \sy)} \left[ S(K(\cdotmid \sy),\theta) \right] \\
    &= \argmax_{K \in \kernfam} \E_{\sy \sim Q}\left[ S(K(\cdotmid \sy),\Pi(\cdotmid \sy)) \right].
\end{align*}
We note that $S(K(\cdotmid \sy),\Pi(\cdotmid \sy))$ is maximized if and only if
$K(\cdotmid \sy) = \Pi(\cdotmid \sy)$ and $K(\cdotmid \sy), \Pi(\cdotmid \sy) \in \probfam$ for fixed $\tilde{y} \in \mathsf{Y}$ since $S$ is a strictly proper scoring rule relative to~$\probfam$. Then under expectation with respect to $Q$, if $\kernfam$ is sufficiently rich, the optima~$K^\star$ must satisfy $K^\star(\cdotmid \sy) = \Pi(\cdotmid \sy)$ almost surely. 
\end{proof}

\subsection{Proof of Theorem~\ref{th:unitweights-as}} \label{pr:unitweights-as}
\begin{proof}
Let $g_z(x)=\frac{\pi(z)}{\bar\pi(z)}\frac{\bar\pi(x)}{\pi(x)}$ and consider $v(\sy_{1:n}) = g(\theta^{\ast}_n)$.  Therefore $w(z,\sy_{1:n}) = g_z(\theta^{\ast}_n)$. Applying the continuous mapping theorem with function $h(x) = g_z(z) - g_z(x) = 1 - g_z(x)$ yields the result.
\end{proof}

\subsection{A Central Limit Theorem for Unit Weights} \label{thpr:unitweights-clt}
\begin{theorem}\label{th:unitweights-clt}
Let $g(x) = \bar \pi(x) / \pi(x)$ for $x \in \Theta$. If there exists an estimator $\theta^{\ast}_n \equiv \theta^{\ast}(\sy_{1:n})$ such that  $\sqrt{n}(\theta^{\ast}_n - z) \distarrow \dNorm(0,\Sigma_z)$ as $n \rightarrow \infty$ when~$\tilde y_i \simiid P(\cdotmid z)$ for $z \in \Theta$, $g(\theta^{\ast}_n) \leq h(\tilde{y}_{1:n})$ a.s. for some integrable function $h$, $g > 0$ a.e., and $\nabla g \neq 0$ a.e., then the error from approximating the weights with $\hat{w} = 1$ as the size of the data $y_{1:n}$ grows satisfies
\begin{align*}
    &\sqrt{n}(\hat{w} - w(\theta,\sy_{1:n})) \distarrow U, \\
    &(U \cd \theta) \sim \dNorm\left(0, \Sigma_\theta^\prime \right), \theta \sim \bar\Pi,
\end{align*}
as $n \rightarrow \infty$ with choice of stabilizing function $v(\sy_{1:n}) = g(\theta^\ast_n)$, where 
\begin{equation*}
    \Sigma_\theta^\prime = \nabla \log g(\theta)^{\top} \Sigma_\theta \nabla\log g(\theta).
\end{equation*} Moreover, $\E(U) = 0$ and the unit weights $\hat{w}$ therefore have asymptotic distribution with variance equal to $\mathrm{var}(U) = \E_{\theta\sim\bar\Pi}(\Sigma_\theta^\prime)$.
\end{theorem}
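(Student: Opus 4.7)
The plan is to reduce the statement to a conditional delta-method argument and then marginalise over $\theta \sim \bar\Pi$ by the tower property. First I would simplify the weight: since $r(\theta) = \pi(\theta)/\bar\pi(\theta) = 1/g(\theta)$ and the stabilising choice is $v(\sy_{1:n}) = g(\theta^\ast_n)$, the true weight is $w(\theta,\sy_{1:n}) = g(\theta^\ast_n)/g(\theta)$. Hence the scaled error admits the identity
\[
\sqrt{n}\bigl(\hat w - w(\theta,\sy_{1:n})\bigr) \;=\; -\frac{1}{g(\theta)}\,\sqrt{n}\bigl(g(\theta^\ast_n) - g(\theta)\bigr),
\]
and the assumption $g>0$ a.e.\ ensures $g(\theta) \in (0,\infty)$ for $\bar\Pi$-almost every $\theta$, so this division is well-defined $\bar\Pi$-a.s.

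Next I would argue conditionally on $\theta \sim \bar\Pi$. Under the hierarchical generative model, $\tilde y_i \simiid P(\cdot \cd \theta)$, so the assumed CLT applies with $z = \theta$, giving $\sqrt{n}(\theta^\ast_n - \theta) \distarrow \dNorm(0,\Sigma_\theta)$. Because $\nabla g$ exists and is nonzero $\bar\Pi$-a.s., the multivariate delta method yields
\[
\sqrt{n}\bigl(g(\theta^\ast_n) - g(\theta)\bigr) \distarrow \dNorm\bigl(0,\; \nabla g(\theta)^\top \Sigma_\theta \nabla g(\theta)\bigr)
\]
conditionally on $\theta$. Dividing by the $\theta$-measurable constant $-g(\theta)$ and using $\nabla g(\theta)/g(\theta) = \nabla \log g(\theta)$ produces the conditional limit $(U \cd \theta) \sim \dNorm(0,\Sigma_\theta^\prime)$ with the claimed variance $\Sigma_\theta^\prime = \nabla \log g(\theta)^\top \Sigma_\theta \nabla \log g(\theta)$.

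Finally I would obtain the unconditional (mixture) statement by averaging the conditional limit over $\theta \sim \bar\Pi$. Each conditional Gaussian is centred, so $\E(U) = 0$ by the tower property, and the law of total variance gives $\mathrm{var}(U) = \E_{\theta \sim \bar\Pi}(\Sigma_\theta^\prime) + \mathrm{var}_{\theta \sim \bar\Pi}(0) = \E_{\theta \sim \bar\Pi}(\Sigma_\theta^\prime)$. The integrable dominating function $h$ with $g(\theta^\ast_n) \leq h(\tilde y_{1:n})$ a.s., combined with dominated convergence, ensures that the pre-limit quantity is uniformly integrable and that the expectations in the final variance formula are finite.

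The main obstacle is the passage from conditional convergence in distribution, which only holds pointwise in $\theta$, to a joint statement about the mixture $U$. This requires measurability of $\theta \mapsto \Sigma_\theta^\prime$ and enough uniform control to interchange the conditional limit with expectation over $\bar\Pi$; the dominated bound and positivity hypotheses are precisely what is needed for this. A clean route is to apply Slutsky's lemma and the continuous mapping theorem to the joint sequence $(\theta, \sqrt{n}(\theta^\ast_n - \theta))$, after verifying continuity $\bar\Pi$-a.s.\ of the map $(\theta, z) \mapsto -\nabla \log g(\theta)^\top z$, and then conclude via dominated convergence to identify the marginal distribution of $U$.
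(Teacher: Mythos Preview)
Your proposal is correct and follows essentially the same route as the paper: express the weight as $w(\theta,\sy_{1:n}) = g(\theta^\ast_n)/g(\theta)$, apply the delta method conditionally on $\theta$ to obtain the Gaussian limit with variance $\nabla\log g(\theta)^\top \Sigma_\theta \nabla\log g(\theta)$, and then pass to the unconditional mixture via the tower property with dominated convergence justifying the interchange of limit and expectation. The only cosmetic difference is that the paper packages the ratio as a single function $g_z(x) = g(x)/g(z)$ and applies the delta method to $g_z$ directly, whereas you factor out $1/g(\theta)$ first; both lead to the same $\nabla\log g$ variance expression.
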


\begin{proof} Take $\theta = z$ fixed and let $g_z(x)=\frac{\pi(z)}{\bar\pi(z)}\frac{\bar\pi(x)}{\pi(x)}$ for $x \in \Theta$. Consider $v(\sy_{1:n}) = g(\theta^{\ast}_n)$ and therefore $w(z,\sy) = g_z(\theta^{\ast}_n)$.
Using the delta method we can deduce that 
\begin{equation*}
    U_n(z) \equiv  \sqrt{n}(g_z(\theta^{\ast}_n) - g_z(z)) \distarrow U(z), \qquad \text{where} \; U(z) \sim \dNorm\left(0,\nabla g_z(z)^{\top} \Sigma_z \nabla g_z(z) \right),
\end{equation*}
noting that $g_z(\theta^\ast_n) = w(z,\sy)$ and $g_z(z) = 1 = \hat w$.
Now let $\theta \sim \bar\Pi$ on measurable space~$(\Theta, \vartheta)$. For all $A \in \vartheta$, consider
\begin{align*}
    \lim_{n\rightarrow \infty} \Pr(U_n(\theta) \in A) 
    &= \lim_{n\rightarrow \infty} \E_{z \sim \bar \Pi} \Pr(U_n(\theta) \in A \cd \theta = z) \\
    & = \E_{z \sim \bar \Pi} \Pr(U(\theta) \in A \cd \theta = z) \\
     & = \Pr(U(\theta) \in A ),
\end{align*}
by the law of total probability and noting that dominated convergence theorem holds since $0 < g(\theta^{\ast}_n) \leq h(y_{1:n})$ implies that $\vert U_n(z) \vert$ is also dominated. Therefore $U_n(\theta) \distarrow U(\theta)$ as $n \rightarrow \infty$ where $U(\theta) \sim \E_{z \sim \bar \Pi} U(z)$, \ie\ a continuous mixture of Gaussian distributions. Using the continuous mapping theorem we can also state that $U_n \distarrow U$ where $U_n \equiv -U_n(\theta)$ and $U \equiv - U(\theta)$ then noting that $\Sigma_z^\prime \equiv \nabla g_z(z)^{\top} \Sigma_z \nabla g_z(z) = \nabla \log g(z)^{\top} \Sigma_z \nabla \log g(z)$ gives the limiting distribution result. Moreover, we can see $\E(U) = 0$ by the law of total expectation and $\mathrm{var}(U) = \E_{z \sim \bar\Pi}(\Sigma_z^\prime)$ by the law of total variance.
\end{proof}
\begin{remark}
If one wishes to estimate the asymptotic variance of the weight approximation, we have freedom to choose the estimator $\theta^\ast$. If possible we should choose the estimator that results in the smallest asymptotic variance $\mathrm{var}(U)$ or smallest conditional variance $\Sigma_\theta$ if equivalent or more convenient. 
\end{remark}

\begin{remark}
If $\max_{x \in \Theta} g(x) = m < \infty$ then $g(\theta^{\ast}_n) \leq m$ and the dominating condition holds. This indicates that using a distribution $\bar\Pi$ with lighter tails than $\Pi$ is appropriate. Such a statement is surprising as this disagrees with well-established importance sampling guidelines. Moreover, if $\bar\pi(\theta) = \hat\pi(\theta \cd y_{1:n})$ then $g(\theta)$ is the approximate likelihood (ignoring the normalizing constant) and a sufficient condition for the domination is that the approximate likelihood~$g(\theta) = \hat p(y_{1:n} \cd \theta)$ is bounded. This is the case for any approximate likelihood for which a maximum likelihood estimate exists.
\end{remark}

\begin{remark}
The dominating condition can also be enforced by only considering bounded $\Theta$. As such the estimator $\theta^{\ast}_n$ and hence $g(\theta^{\ast}_n)$ will typically be bounded. This is the approach taken by \citet{deistler2022truncated} for sequential neural posterior estimation but no asymptotic justification is given. Our results may be useful in this case, and more generally for this area, but we leave exploration for future research.
\end{remark}

\section{Package and Code Acknowledgments}\label{sec:code}
A \texttt{julia} package with code that can be applied to any approximate model and data-generating process can be found at 
\url{https://github.com/bonStats/BayesScoreCal.jl}, our examples are contained in \url{https://github.com/bonStats/BayesScoreCalExamples.jl}.

Our implementation relies heavily on \texttt{Optim.jl} \citep{Optim.jl-2018} and our examples make use of
\texttt{DifferentialEquations.jl} \citep{DifferentialEquations.jl-2017},
\texttt{Distributions.jl} \citep{JSSv098i16}, and
\texttt{Turing.jl} \citep{ge2018t}.

\section{Idealized Versus Practical Weighting Functions}\label{sec:weightfunctions}

An optimal stabilizing function would necessitate that $w(\theta, \sy) = C$, for some constant $C$, though such a function need not exist. However, considering the properties of a theoretical optimal stabilizing function is useful for our asymptotic results in Section~\ref{sec:uweights}. If there were a deterministic function $g$ perfectly predicting $\theta$ from $\sy$, \ie\ $g(\sy) = \theta$ if $\sy \sim P(\cdotmid \theta)$, then $v(\sy) = \bar\pi[g(\sy)] / \pi[g(\sy)]$ would be the optimal stabilizing function. 

In the absence of such a $g$, we could approximate the stabilizing function by
\begin{equation}\label{eq:stabiliseropt}
    v(\sy) = \frac{\bar\pi[\theta^{\star}(\sy)]}{\pi[\theta^{\star}(\sy)]}, \quad \theta^{\star}(\sy) = \arg\max_{\vartheta \in \Theta} p(\sy \cd \vartheta)\bar\pi(\vartheta),
\end{equation} 
where $\theta^{\star}(\sy)$ is the maximum \textit{a posteriori} (MAP) estimate of $\theta$ given $\sy$. The maximum likelihood estimate could also be used. In the case that $\theta^{\star}(\sy) \approx \theta$ we can deduce that $w(\theta, \sy) \approx C$, though deviations from this may be quite detrimental to the variance of the weights. Unfortunately we do not have access to the likelihood $p(\sy \mid \cdot\,)$, so $\theta^{\star}$ is intractable. The approximate likelihood $\hat p$ is a practical replacement for $p$ but will be likely to further increase the variance of the weights.

As for the importance distribution, a natural way to concentrate $\theta$ about likely values of the posterior given $y$ is to use $\bar\Pi(\cdot) = \hat\Pi(\cdotmid y)$. This generates data sets $\sy$ such that they are consistent with $y$ according to the approximate posterior. The idealized setting, with no Monte Carlo error and an accurate MAP using the approximate likelihood, therefore uses 
\begin{equation*}\label{eq:stabiliseroptapprox}
     v(\sy) = \frac{\bar\pi[\theta^{\diamond}(\sy)]}{\pi[\theta^{\diamond}(\sy)]}, \quad \theta^{\diamond}(\sy) = \arg\max_{\vartheta \in \Theta} \hat p(\sy \cd \vartheta)\bar\pi(\vartheta),
\end{equation*}
with $\bar\Pi(\cdot) =  \hat\Pi(\cdotmid y)$. If $\theta^{\diamond}(\sy)$ is a biased estimator of $\theta$, we could estimate this bias and correct for it to ensure~${w(\theta,\sy) \approx C}$.

In some cases choosing $\bar\Pi(\cdot) =  \hat\Pi(\cdotmid y)$ may be adequate, but it depends crucially on the tail behavior of the ratio~$\pi(\theta)/\bar\pi(\theta)$ and how well the stabilizing function performs. It may be pertinent to artificially increase the variance of the $\bar\Pi$ by transformation or consider an approximation to the chosen distribution with heavier tails. 

The simple countermeasure we consider is to truncate or clip the weights as discussed in Section~\ref{sec:practicalweights}.

\section{Additional Results from Examples}\label{app:tabfigs}
This section contains additional results from the examples in Section~\ref{sec:examples}.  
\begin{longtable}{lrr}
\toprule
 & \multicolumn{2}{c}{Correlation} \\ 
\cmidrule(lr){2-3}
Method & Mean & SD \\ 
\midrule\addlinespace[2.5pt]
Approx-post & $0.00$ & $0.02$ \\ 
Adjust-post (0) & $0.25$ & $0.38$ \\ 
Adjust-post (0.5) & $0.29$ & $0.12$ \\ 
Adjust-post (1) & $0.31$ & $0.12$ \\ 
True-post & $0.41$ & $0.06$ \\ 
\bottomrule
\\
\caption{Summary of empirical correlation between parameter samples of $\rho$ and $D$ over 100 independent data sets for the bivariate OU process example. The posteriors compared are the original approximate posterior (Approx-post) and adjusted posteriors (Adjust-post) with $(\alpha)$ clipping, and the true posterior (True-post). \label{tab:corr_bi_ou}}
\end{longtable}
\clearpage
\begin{longtable}{lrrrr}
\toprule
Method & MSE & Bias & SD & AC$^1$ \\ 
\midrule\addlinespace[2.5pt]
\multicolumn{5}{l}{$\mu$} \\ 
\midrule\addlinespace[2.5pt]
Approx-post & $0.10$ & $0.00$ & $0.22$ & $92$ \\ 
Adjust-post (0) & $0.11$ & $0.00$ & $0.18$ & $73$ \\ 
Adjust-post (0.5) & $0.10$ & $0.01$ & $0.22$ & $90$ \\ 
Adjust-post (1) & $0.10$ & $0.01$ & $0.22$ & $89$ \\ 
True-post & $0.10$ & $0.00$ & $0.22$ & $90$ \\ 
\midrule\addlinespace[2.5pt]
\multicolumn{5}{l}{$D$} \\ 
\midrule\addlinespace[2.5pt]
Approx-post & $2.51$ & $0.11$ & $1.01$ & $84$ \\ 
Adjust-post (0) & $2.99$ & $0.17$ & $0.96$ & $72$ \\ 
Adjust-post (0.5) & $2.73$ & $0.11$ & $1.08$ & $83$ \\ 
Adjust-post (1) & $2.73$ & $0.12$ & $1.08$ & $83$ \\ 
True-post & $2.84$ & $0.15$ & $1.15$ & $87$ \\ 
\midrule\addlinespace[2.5pt]
\multicolumn{5}{l}{$\rho$} \\ 
\midrule\addlinespace[2.5pt]
Approx-post & $0.01$ & $-0.02$ & $0.07$ & $84$ \\ 
Adjust-post (0) & $0.01$ & $-0.01$ & $0.06$ & $73$ \\ 
Adjust-post (0.5) & $0.01$ & $-0.01$ & $0.07$ & $85$ \\ 
Adjust-post (1) & $0.01$ & $-0.01$ & $0.07$ & $86$ \\ 
True-post & $0.01$ & $-0.02$ & $0.08$ & $85$ \\ 
\bottomrule
\\
\caption{Average results for each parameter over 100 independent data sets for the bivariate OU process example. The posteriors compared are the original approximate posterior (Approx-post) and adjusted posteriors (Adjust-post) with $(\alpha)$ clipping. $^1$Achieved coverage with 90\% target.\label{tab:repeated_results_mv_ou}}
\end{longtable}

\bibliography{references}

\clearpage
\renewcommand{\thesection}{S}
\section*{Supplementary Materials}\label{sec:supplement}
The supplementary materials contain three additional examples of the Bayesian score calibration method. 
\subsection{Conjugate Gaussian model}\label{sec:ex-gauss}

\begin{figure}[p]
		\centering
		\includegraphics[scale=0.7]{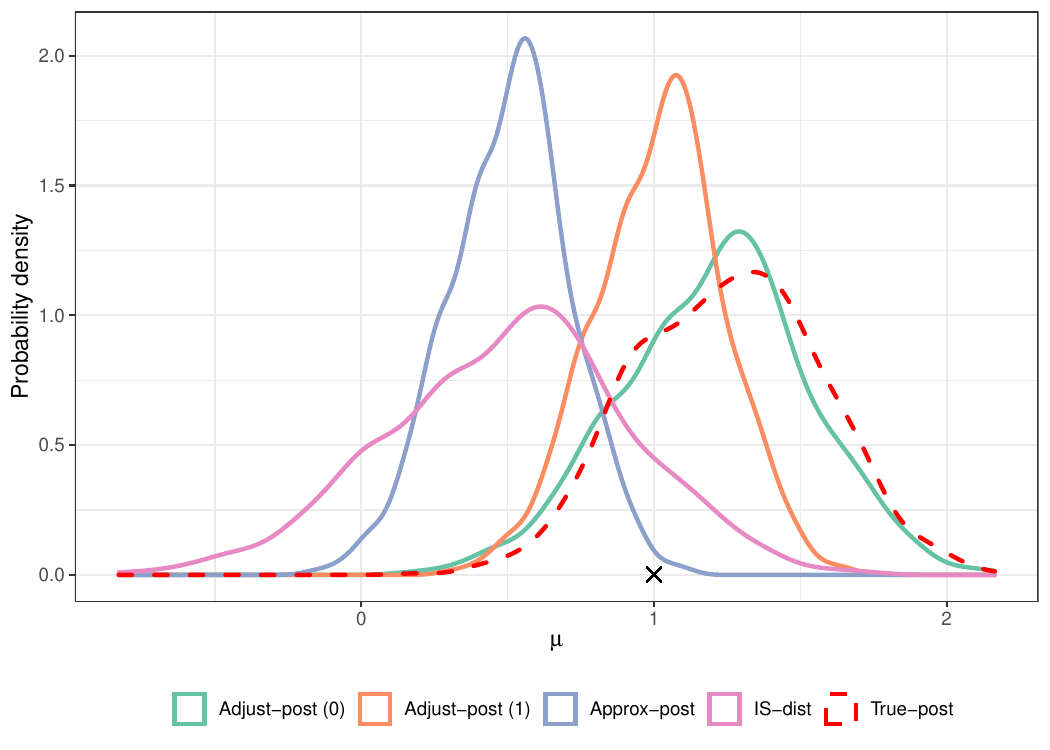}
		\caption{Conjugate Gaussian model univariate densities estimates of approximations to the posterior distribution for a single dataset. The original approximate posterior (Approx-post), importance (IS-dist), and adjusted posteriors (Adjust-post) with $(\alpha)$ clipping are shown with solid lines. The true posterior (True-post) is shown with a dashed line. The true generating parameter value is indicated with a cross $(\times)$.}
		\label{fig:posteriors_normal}
		
\end{figure}

We consider a toy conjugate Gaussian example.  Here the data $y$ are $n=10$ independent samples from a $\dNorm(\mu, \sigma^2)$ distribution with $\sigma^2 = 1$ assumed known and $\mu$ unknown.  Assuming a Gaussian prior $\mu \sim \dNorm(\mu_0, \sigma_0^2)$, the posterior is $(\mu|y) \sim \dNorm(\mu_\mathrm{post}, \sigma_\mathrm{post}^2)$ where
\begin{align*}
    \mu_\mathrm{post} &= \frac{1}{\sigma_0^{-2} + n\sigma^{-2}} \left(  \frac{\mu_0}{\sigma_0^2} + \frac{\sum_{i=1}^ny_i}{\sigma^2} \right) \mbox{  and  }  
    \sigma_\mathrm{post}^2 = \frac{1}{\sigma_0^{-2} + n\sigma^{-2}}.
\end{align*}
We assume this is the target model.  For the approximate model, we introduce random error into the posterior mean and standard deviation
\begin{equation}\label{eq:nnapproxperturb}
    \mu_\mathrm{approx} = \frac{\mu_\mathrm{post} - \mu_\mathrm{error} }{\sigma_\mathrm{error}}  \mbox{  and  }  
    \sigma_\mathrm{approx} = \frac{\sigma_\mathrm{post}}{\sigma_\mathrm{error}},
\end{equation}
where $\mu_\mathrm{error} \sim \dNorm(0.5, 0.025^2)$ and $\sigma_\mathrm{error} \sim \dFNorm(1.5, 0.025^2)$ and $\dFNorm$ denotes the folded-normal distribution. During our simulation the approximate posterior distribution is calculated for each dataset according to \eqref{eq:nnapproxperturb}. The perturbation is random but remains fixed for each dataset. For the stabilizing function we use $v(\sy) = 1$ for simplicity.

We coded this simulation in \texttt{R} \citep{rlang2021} using exact sampling for the true and approximate posteriors. Results based on 100 independent datasets generated from the model with true value $\mu = 1$, and prior parameters $\mu_0 = 0, \sigma_0^2 = 4^2$ are shown in Table \ref{tab:repeated_results_normal}. We truncate the weights from the model at quantiles from the empirical weight distribution. We test truncating the weights for $\alpha = 0$ (no clipping), $0.25, 0.5, 0.9$, and $1$ (uniform weights).  It can be seen that the adjusted approximation (for all $\alpha$) is a marked improvement over the initial approximation, which is heavily biased and has poor coverage.  The estimated posterior distributions based on a single dataset is shown in Figure \ref{fig:posteriors_normal} as an example. We can see that, for this example dataset, the adjusted approximate posteriors are a much better approximation to the true posterior.

\begin{table}[!htp]
    \centering
    \begin{tabular}{lcccc}
    \hline
    Method & MSE & Bias & SD & AC$^1$ \\ 
    \hline
Approx-post & $0.48$ & $-0.64$ & $0.21$ & 0.24 \\ 
Adjust-post (0) & $0.21$ & $-0.16$ & $0.31$ & 0.99 \\ 
Adjust-post (0.25) & $0.15$ & $-0.18$ & $0.26$ & 0.98 \\ 
Adjust-post (0.5) & $0.15$ & $-0.18$ & $0.25$ & 0.98 \\ 
Adjust-post (0.9) & $0.14$ & $-0.18$ & $0.25$ & 0.98 \\ 
Adjust-post (1) & $0.14$ & $-0.18$ & $0.25$ & 0.98 \\ 
True-post & $0.16$ & $0.02$ & $0.31$ & 1.00 \\
    \hline
    \end{tabular}
    \caption{Average results over 100 independent observed datasets for the Gaussian example. The posteriors compared are the original approximate posterior (Approx-post), adjusted posteriors (Adjust-post) with $\alpha$ clipping, and the true posterior (True-post). $^1$Achieved coverage with 90\% target.}
    \label{tab:repeated_results_normal}
\end{table}

\subsection{Fractional ARIMA model} \label{sec:ex-arfima}

Let $\{X_{t}\}_{t=1}^{n}$ be a zero-mean equally spaced time series with stationary covariance function $\kappa(\tau,\theta) = \E(X_{t}X_{t-\tau})$ where $\theta$ is a vector of model parameters.  Here we consider an autoregressive fractionally integrated moving average model (ARFIMA) model for $\{X_{t}\}_{t=1}^{n}$, described by the polynomial lag operator equation as
\begin{align*}
\phi(L)(1-L)^d X_t = \vartheta(L)\epsilon_t,
\end{align*}
where $\epsilon_t \sim \dNorm(0,\sigma^2)$, $L$ is the lag operator, $\phi(z) = 1 - \sum_{i=1}^p \phi_i z_i^i$ and $\vartheta(z) = 1 + \sum_{i=1}^q \vartheta_i z_i^i$.  We denote the observed realized time series as $y = (y_1,y_2,\ldots,y_n)^\top$ where $n$ is the number of observations.  Here we consider an ARFIMA($p,d,q$) model where $p = 2$, $q=1$, and the observed data is simulated with the true parameter $\theta = (\phi_1, \phi_2, \vartheta_1, d)^\top = (0.45, 0.1, -0.4, 0.4)^\top$ with $n = 15,000$.  As in \citet{bon2021accelerating}, we impose stationarity conditions by transforming the polynomial coefficients of $\phi(z)$ and $\vartheta(z)$ to partial autocorrelations \citep{barndorff1973parametrization} taking values on $[-1,1]^p$ and $[-1,1]^q$ respectively, to which we assign a uniform prior. We apply an inverse hyperbolic tangent transform to map the ARMA parameters to the real line to facilitate posterior sampling.  The fractional parameter $d$ has bounds $(-0.5, 0.5)$, we sample over the transformed parameter $\tilde{d} = \mbox{tanh}^{-1}(2d)$, and assume that $\tilde{d} \sim \dNorm(0,1)$ \emph{a priori}. 

The likelihood function of the ARFIMA model for large $n$ is computationally intensive.  As in, for example, \citet{salomone2020spectral} and \citet{bon2021accelerating}, we use the Whittle likelihood \citep{whittle1953estimation} as the approximate likelihood to form the approximate posterior.  Transforming both the data and the covariance function to the frequency domain enables us to construct the Whittle likelihood with these elements rather than using the time domain as inputs. The Fourier transform of the model's covariance function, or the spectral density $f_{\theta}(\omega)$, is
$$
f_{\theta}(\omega) = \frac{1}{2\pi} \sum_{\tau = -\infty}^{\infty}\kappa(\tau,\theta)\exp(-i \omega\tau),
$$
where the angular frequency $\omega \in (-\pi, \pi]$. Whereas the discrete Fourier transform (DFT) of the time series data is defined as
\begin{equation*}
J(\omega_{k}) = \frac{1}{\sqrt{2\pi}}\sum_{t = 1}^{n}X_{t}\exp(-i \omega_{k}t),\quad  \omega_{k} = \frac{2\pi k}{n},
\end{equation*}
using the Fourier frequencies $\{\omega_{k}: k = -\lceil n/2 \rceil+1,\ldots, \lfloor n/2 \rfloor\}$. Using the DFT we can calculate the periodogram, which is an estimate of the spectral density based on the data:
$$
\mathcal{I}(\omega_{k}) = \frac{|J(\omega_{k})|^2}{n}.
$$
Then the Whittle log-likelihood \citep{whittle1953estimation} can be defined as
$$
\ell_{\text{whittle}}(\theta) = -\sum_{k = -\lceil n/2 \rceil+1}^{\lfloor n/2 \rfloor}\left(\log f_{\theta}(\omega_{k}) + \frac{\mathcal{I}(\omega_{k})}{f_{\theta}(\omega_{k})}\right).
$$
In practice the summation over the Fourier frequencies, $\omega_{k}$, need only be evaluated on around half of the values due to symmetry about $\omega_{0}=0$ and since $f_{\theta}(\omega_{0}) = 0$ for centred data.

The periodogram can be calculated in $\mathcal{O}(n \log n)$ time, and only needs to be calculated once per dataset. After dispersing this cost, the cost of each subsequent likelihood evaluation is $\mathcal{O}(n)$, compared to the usual likelihood cost for time series (with dense precision matrix) which is $\mathcal{O}(n^2)$.

Since this example is more computationally intensive, we do not repeat the whole process 100 times in our simulation study.  Instead, we fix the observed data and base the repeated dataset results on the 100 calibration datasets (generated from the 100 calibration parameter values) that are produced in a single run of the process.  However, we do not validate based on the datasets used in the calibration step, but rather generate 100 fresh datasets from the calibration parameter values. As such, we do not provide a comparison to the true posterior in Table~\ref{tab:repeated_results_whittle} as it is fixed and has a high computational cost to sample from. We consider univariate moment-correcting transformations, i.e. $A$ in (8) %\eqref{eq:momcorr} 
is diagonal, since the covariance structure is well approximated by the Whittle likelihood in this example.  We also choose the stabilizing function to be $v(\sy) = 1$ for simplicity.

To generate samples from the approximate and true posterior distributions we use a sequential Monte Carlo sampler \citep{del2006sequential}. In particular, we use likelihood annealing with adaptive temperatures \citep{jasra2011inference,beskos2016convergence} and a Metropolis-Hastings mutation kernel with a multivariate Gaussian proposal. The covariance matrix is learned adaptively as in \citet{chopin2002sequential}. The simulation is coded in \texttt{R} \citep{rlang2021}.

\begin{figure}
		\centering
		\includegraphics[width=\textwidth]{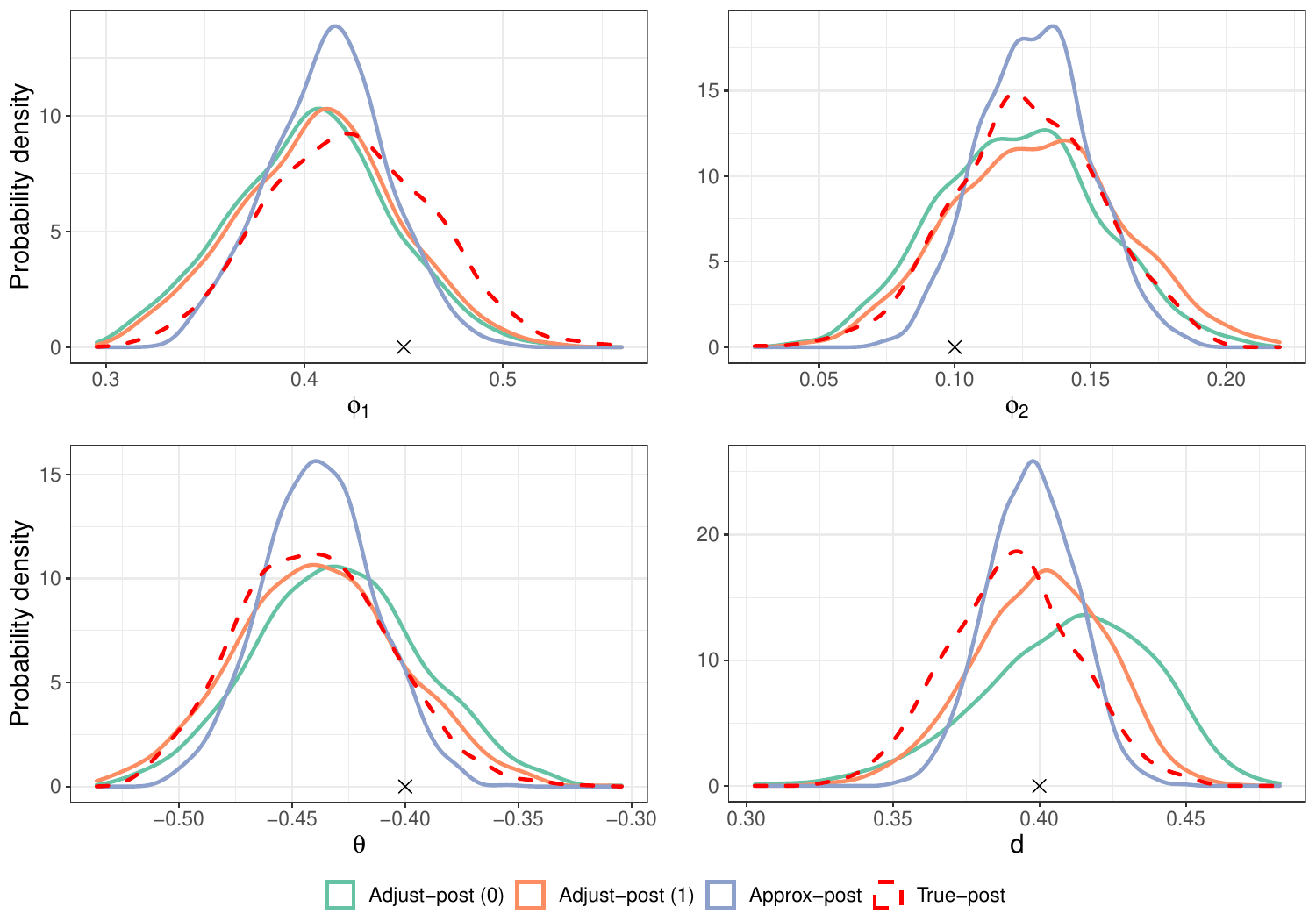}
		\caption{Estimated univariate posterior distributions for the Whittle likelihood example.  Distributions shown are the original approximate posterior (Approx-post) and adjusted posteriors (Adjust-post) with $(\alpha)$ clipping. The true posterior (True-post) is shown with a dashed line. The true generating parameter value is indicated with a cross $(\times)$.}
		\label{fig:posteriors_whittle}
\end{figure}

The repeated run results for the parameters are shown in Table~\ref{tab:repeated_results_whittle}.  It is evident that the Whittle approximation performs well in terms of estimating the location of the posterior, but the estimated posterior standard deviation is slightly too small, which leads to some undercoverage.  The adjusted posteriors inflate the variance and obtain more accurate coverage of the calibration parameters.  An example adjustment for the true dataset is shown in Figure \ref{fig:posteriors_whittle}, which shows that the adjustment inflates the approximate posterior variance. We also compute the calibration diagnostic to confirm the method is performing appropriately on the original data. Figure~\ref{fig:cal_whittle} shows that the achieved coverage is close to the target coverage, across a range of targets, hence the method is performing well.

\begin{longtable}{lrrrr}
\toprule
Method & MSE & Bias & SD & AC$^1$ \\ 
\midrule
\multicolumn{1}{l}{$\phi_1$} \\
\midrule
Approx-post & $0.004$ & $0.003$ & $0.031$ & 67 \\ 
Adjust-post (0) & $0.005$ & $-0.007$ & $0.041$ & 83 \\ 
%Adjust-post 2 (75\%) & $0.004$ & $-0.006$ & $0.040$ & 81 \\ 
Adjust-post (0.5) & $0.004$ & $-0.003$ & $0.041$ & 83 \\ 
%Adjust-post 4 (25\%) & $0.005$ & $-0.003$ & $0.041$ & 83 \\ 
Adjust-post (1) & $0.005$ & $-0.003$ & $0.041$ & 83 \\ 
\midrule
\multicolumn{1}{l}{$\phi_2$} \\
\midrule
Approx-post & $0.002$ & $-0.001$ & $0.021$ & 74 \\ 
Adjust-post (0) & $0.002$ & $-0.007$ & $0.031$ & 86 \\ 
%Adjust-post 2 (75\%) & $0.002$ & $0.003$ & $0.029$ & 86 \\ 
Adjust-post (0.5) & $0.002$ & $0.001$ & $0.031$ & 89 \\ 
%Adjust-post 4 (25\%) & $0.002$ & $0.001$ & $0.032$ & 89 \\ 
Adjust-post (1) & $0.002$ & $0.000$ & $0.032$ & 89 \\ 
\midrule
\multicolumn{1}{l}{$\vartheta_1$} \\ 
\midrule
Approx-post & $0.002$ & $0.000$ & $0.025$ & 73 \\ 
Adjust-post (0) & $0.003$ & $0.009$ & $0.037$ & 87 \\ 
%Adjust-post 2 (75\%) & $0.003$ & $-0.001$ & $0.035$ & 85 \\ 
Adjust-post (0.5) & $0.003$ & $0.000$ & $0.036$ & 88 \\ 
%Adjust-post 4 (25\%) & $0.003$ & $0.000$ & $0.036$ & 88 \\ 
Adjust-post (1) & $0.003$ & $0.000$ & $0.037$ & 88 \\ 
\midrule
\multicolumn{1}{l}{$d$} \\ 
\midrule
Approx-post & $0.001$ & $-0.004$ & $0.017$ & 74 \\ 
Adjust-post (0) & $0.002$ & $0.008$ & $0.032$ & 96 \\ 
%Adjust-post 2 (75\%) & $0.001$ & $0.001$ & $0.023$ & 88 \\ 
Adjust-post (0.5) & $0.001$ & $0.000$ & $0.024$ & 89 \\ 
%Adjust-post 4 (25\%) & $0.001$ & $0.000$ & $0.024$ & 89 \\ 
Adjust-post (1) & $0.001$ & $-0.001$ & $0.025$ & 90 \\ 
\bottomrule
\\
\caption{Average results for each parameter over 100 independent calibration datasets (fixed observation dataset) for the Whittle example. The posteriors compared are the original approximate posterior (Approx-post) and adjusted posteriors (Adjust-post) with $(\alpha)$ clipping. $^1$Achieved coverage with 90\% target.\label{tab:repeated_results_whittle}}
\end{longtable}

\begin{figure}
		\centering
		\includegraphics[width=0.5\textwidth]{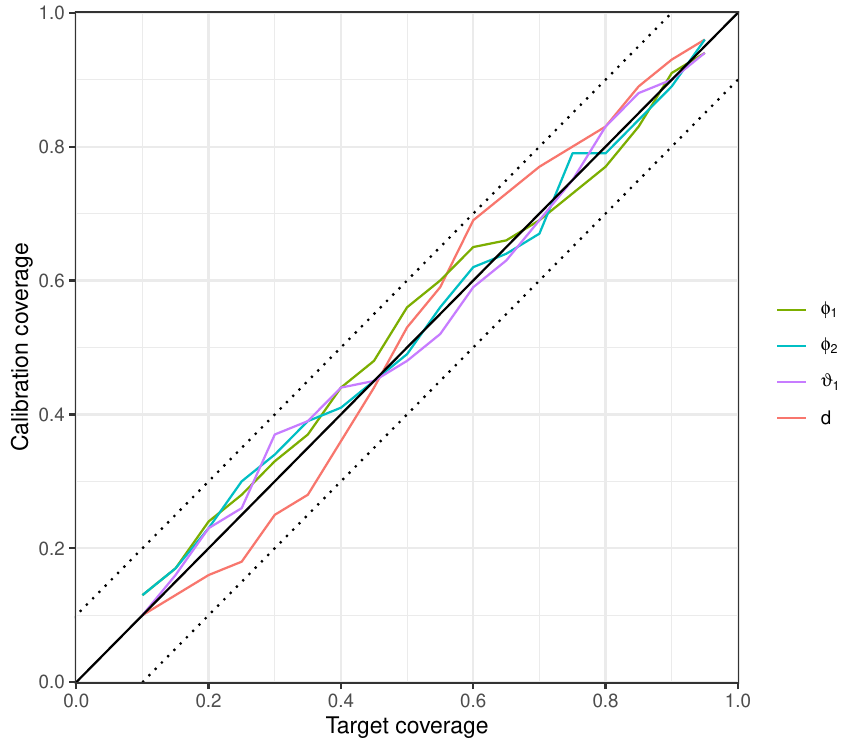}
		\caption{Calibration checks for all parameters in the Whittle likelihood example for $\alpha = 1$ with $\pm 0.1$ deviation from parity shown with a dotted line.}
		\label{fig:cal_whittle}
\end{figure}

\subsection{Lotka-Volterra model with ABC-like posterior}\label{sec:lotka-abc}

We also interested in testing how well our method can perform in a situation where there are no guarantees about the approximate model being employed. We test an alternative approximate posterior for the Lotka-Volterra SDE in Section~3.2 %\ref{sec:ex-lotka} 
once\footnote{See \url{https://github.com/TuringLang/Turing.jl/issues/2216} for discussion, the SDE example has now been removed from the tutorial but is available here \url{https://web.archive.org/web/20210419133415/https://turing.ml/dev/tutorials/10-bayesiandiffeq/}.} appearing in the \texttt{Turing.jl} tutorials \citep{ge2018t}. It used an unreferenced method for inference on the parameters of an SDE similar to an ABC posterior but with unknown kernel bandwidth. Despite its unknown inferential properties, we can correct the approximation using Bayesian score calibration and assess the correction using the calibration diagnostics.  For the approximate model we use the noisy quasi-likelihood 
\begin{equation*}
    l(\beta_{1:4}, \tau \cd x_{1:n}, y_{1:n}) = \tau^{2n}\exp\left(-\frac{\tau^2}{2} \sum_{i=1}^{n}\left[(x_i^\prime - x_i)^{2} + (y_i^\prime - y_i)^{2}\right] \right),
\end{equation*}
where $\{(x_i^\prime,y_i^\prime)\}_{i=1}^{n}$ are simulated conditional on the $\beta_{1:4}$ using a rough approximation to the SDE (14). %\eqref{eq:sde_lotka}
In particular, we use the Euler-Maruyama method with $\Delta t = 0.01$. For priors we use $\beta_i \simiid \dUnif(0.1,5)$ for $i \in \{1,2,3,4\}$ and~${\tau \sim \dGam(2,3)}$.  The quasi-likelihood is reminiscent of an approximate likelihood used in ABC.  In particular, a Gaussian kernel is used to compare observed and simulated data, where $\tau$ plays a similar role to the tolerance in ABC.  Usually in ABC, the tolerance is chosen to be small and fixed, but here we take it as random.  However, as shown in \citet{bortot2007inference}, for example, using a random tolerance can enhance mixing of the MCMC chain.  The resulting approximate posterior therefore has two levels of approximation; (i) an ABC-like posterior which uses (ii) a coarse approximate simulator rather than the true data generating process (which would require relatively more computation).
\begin{figure}
		\centering
		\includegraphics[width=0.9\textwidth]{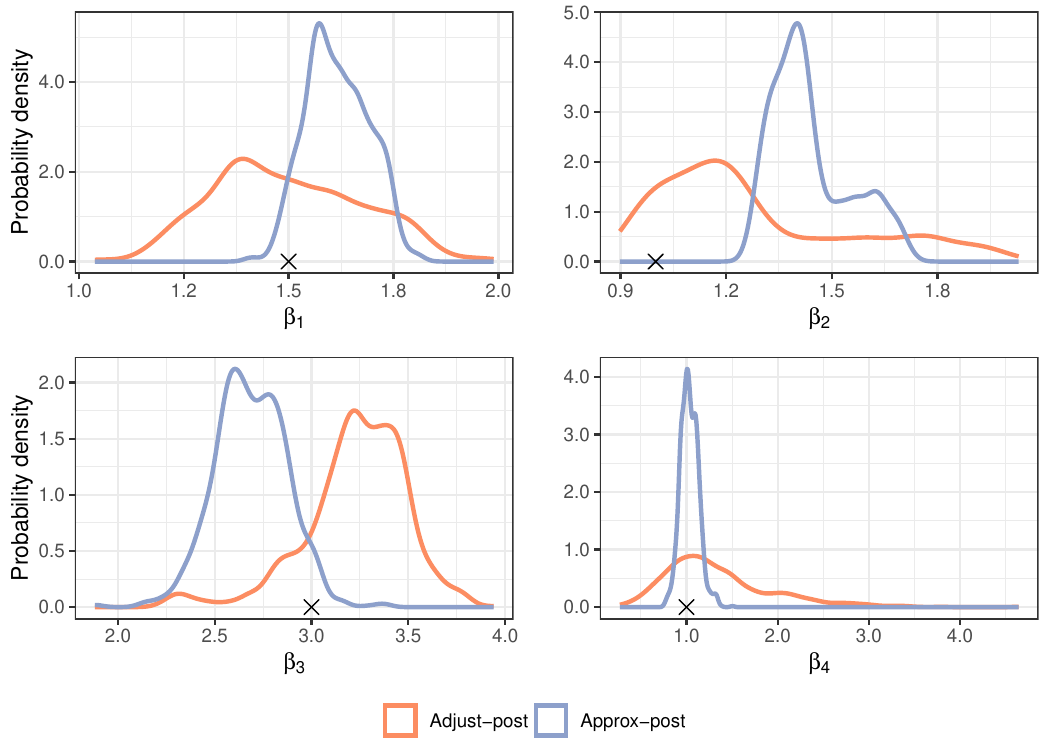}
		\caption{Estimated marginal posterior distributions for the Lotka-Volterra example.  Distributions shown are the original approximate posterior (Approx-post) and adjusted posteriors (Adjust-post) with $\alpha = 1$ clipping.  The true generating parameter value is indicated with a cross $(\times)$.}
		\label{fig:posteriors_lotka}
\end{figure}
For the calibration procedure we use $M = 200$ calibration datasets and use the unit weight approximation. We draw samples from the approximate posterior using the NUTS Hamiltonian Monte Carlo sampler with 0.25 target acceptance rate since the gradient is noisy. Finally, we use a multivariate moment-correcting transformation with dimension $d=4$ and add a squared penalty term to all parameters of the lower-diagonal scaling matrix $L$. We shrink the diagonal elements of $L$ to one, and off-diagonals elements to zero with rate $\lambda = 0.05$.

The results show that the adjusted posterior in this example does significantly better than the approximate posterior. Figure~\ref{fig:posteriors_lotka} shows the marginal posteriors have much better coverage of the true parameter values and generally increase the variance of the approximate posterior to reflect that the approximation is too precise. Significant bias in parameter~$\beta_2$ is also corrected. The target 90\% coverage estimate from the calibration diagnostic was $(0.32, 0.31, 0.29, 0.37)$ for the approximate posterior, and $(0.68, 0.79, 0.69, 0.97)$ for the adjusted posterior for $(\beta_1,\beta_2, \beta_3, \beta_4)$ respectively.

Despite overall positive results, the calibration diagnostic shows that caution is required when using this adjusted approximate posterior. Though we see the adjusted posterior has significantly better achieved coverage than the approximate posterior in Figure~\ref{fig:cal_lotka}, it still lags the nominal target coverage for parameters $\beta_1$ and $\beta_3$. Hence a conservative approach should be taken to constructing credible regions for those parameters. The calibration diagnostic is a useful byproduct of our method, alerting users to potential problems.

\begin{figure}
		\centering
		\includegraphics[width=0.9\textwidth]{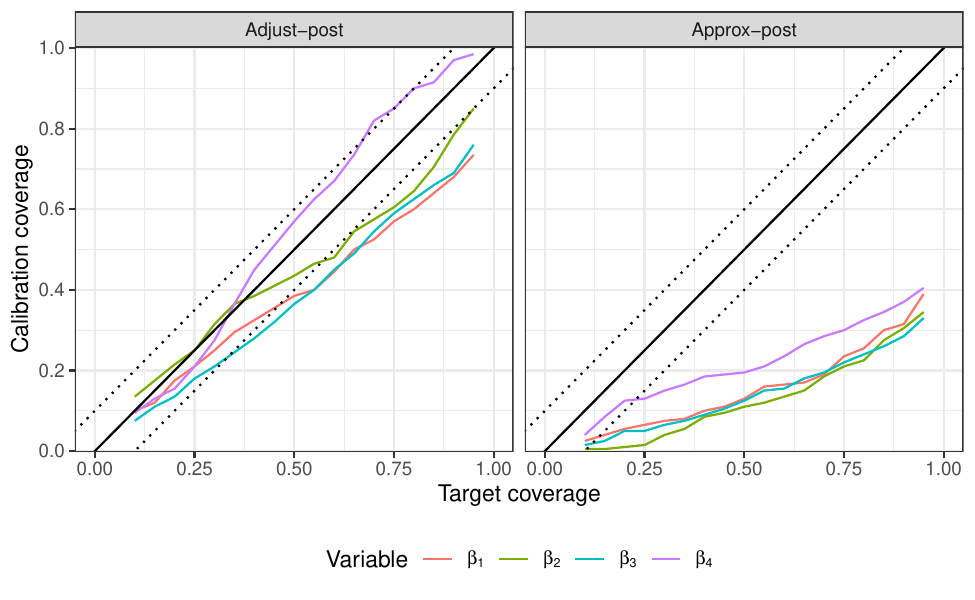}
		\caption{Calibration checks for all parameters in the Lotka-Volterra (ABC-like posterior) example for $\alpha = 1$ with $\pm 0.1$ deviation from parity shown with a dotted line.}
		\label{fig:cal_lotka}
\end{figure}

We also note a potential over-correction in the bias of $\beta_3$ seen in Figure~\ref{fig:posteriors_lotka}. Investigating this further we find that the calibration samples that were simulated were in the range of $\beta_3 \in [1.5, 2.9]$ which excluded the true parameter value. Hence the transformation learned may have been skewed to this region. Increasing the scale of the importance distribution that generates the calibration samples may help this. Using a more sophisticated transformation or performing the calibration sequentially, as in \citet{pacchiardi2022likelihood}, may also be possible in future work. The ranges covered by the importance distribution for the other parameters were $\beta_1 \in [1.5,  2.4], \beta_2 \in [0.9, 2.7]$ and $\beta_4 \in [0.6, 1.1]$. A different importance distribution should be used if one suspects these ranges do not adequately cover the likely values of the true posterior.

\end{document}